
\documentclass[final]{IEEEtran}
\usepackage{url}
\usepackage{amssymb,amsmath}
\usepackage[all]{xy}
\usepackage{bm}
\usepackage[lined,boxed,commentsnumbered]{algorithm2e}
\usepackage{subfigure}
\usepackage{graphicx}
\usepackage{color,psfrag}
\usepackage{xspace}
\usepackage{url}
\usepackage{cite,xspace,xy,soul}
\usepackage{dsfont}
\usepackage{cases}

%
%




\begin{document}

\title{Energy-Efficient Relaying over Multiple Slots \\ with Causal CSI}

\author{\IEEEauthorblockN{Chin Keong Ho, Peng Hui Tan, and Sumei Sun \today}
\authorblockA{Institute for Infocomm Research, A*STAR,
Singapore 138632.
\\e-mail: \{hock,phtan,sunsm\}@i2r.a-star.edu.sg}%
}
\author{Chin Keong Ho, Peng Hui Tan, and~Sumei Sun%
\thanks{This work was presented in part at IEEE ICC, Ottawa, Canada, June 2012.}
\thanks{C. K. Ho, P. H. Tan and S. Sun are with the Institute for Infocomm Research, A*STAR, 1 Fusionopolis Way, \#21-01 Connexis, Singapore 138632 (e-mail: \{hock, phtan, sunsm\}@i2r.a-star.edu.sg).}
}

\newtheorem{conjecture}{Conjecture}
\newtheorem{remark}{Remark}
\newtheorem{insight}{Insight}
\newtheorem{question}{Question}
\newtheorem{proposition}{Proposition}
\newtheorem{corollary}{Corollary}
\newtheorem{lemma}{Lemma}
\newtheorem{assumption}{Assumption}
\newtheorem{theorem}{Theorem}
\newtheorem{example}{Example}
\newtheorem{property}[theorem]{Property}

\newcommand{\myse}{\IEEEyessubnumber} 
\newcommand{\myses}{\myse\IEEEeqnarraynumspace} 

\newcommand{\set}[1]{\mathcal{#1}}

\newcommand{\bn}{\begin{enumerate}}
\newcommand{\en}{\end{enumerate}}

\newcommand{\bi}{\begin{itemize}}
\newcommand{\ei}{\end{itemize}}

\newcommand{\be}{\begin{IEEEeqnarray}{rCl}}
\newcommand{\ee}{\end{IEEEeqnarray}}

\newcommand{\benl}{\begin{IEEEeqnarray*}}
\newcommand{\eenl}{\end{IEEEeqnarray*}}

\newcommand{\bel}{\begin{IEEEeqnarray}}
\newcommand{\eel}{\end{IEEEeqnarray}}

\newcommand{\ben}{\begin{IEEEeqnarray*}{rCl}}
\newcommand{\een}{\end{IEEEeqnarray*}}

\newcommand{\barr}{\begin{array}}
\newcommand{\earr}{\end{array}}

\newenvironment{definition}[1][Definition:]{\begin{trivlist}
\item[\hskip \labelsep {\it #1}]}{\end{trivlist}}

\newcommand{\ud}{\mathrm{d}} 

\newcommand{\FigSize}{0.6}
\newcommand{\FigSizeSmall}{0.5}

\newcommand{\avesnr} {\bar{\gamma}} 
\newcommand{\snr} {\gamma} 

\newcommand{\re}[1]{(\ref{#1})}

\newcommand{\Pe} {P_{\mathrm {e}}} 

\newcommand{\goodgap}{%
\hspace{\subfigtopskip}%
\hspace{\subfigbottomskip}}

\newcommand{\dhat}[1]{\Hat{\Hat{#1}}} 
\newcommand{\that}[1]{\Hat{\Hat{\Hat{#1}}}} 
\newcommand{\dtilde}[1]{\Tilde{\Tilde{#1}}} 
\newcommand{\ttilde}[1]{\Tilde{\Tilde{\Tilde{#1}}}} 

\newcommand{\trace}[1]{\mathrm{tr}\{#1\}}

\newcommand{\mi}{I}
\newcommand{\reward}{\mathsf{r}}

\maketitle

\newcommand{\aubopt}{\widetilde{\alpha}}
\newcommand{\Jub}{\widetilde{J}}
\newcommand{\fub}{\widetilde{f}}
\newcommand{\pub}{\widetilde{p}}

\newcommand{\source}{\mathsf S}
\newcommand{\relay}{\mathsf R}
\newcommand{\dest}{\mathsf D}

\newcommand{\bits}{B}

\newcommand{\pow}{\mathbf{p}}
\newcommand{\powone}{p^1}
\newcommand{\pows}{p^{\source}}
\newcommand{\powsmin}{q_{\mathrm{min}}}
\newcommand{\powsc}{\pows}
\newcommand{\powr}{p^{\relay}}
\newcommand{\powro}{p^{\relay\star}}
\newcommand{\powso}{p^{\source\star}}
\newcommand{\powtwo}{p} 
\newcommand{\powtwoopt}{p^{\star}} 
\newcommand{\powtwosub}{\widetilde{p}} 
\newcommand{\powtwolow}{p^{\mathrm{l}}} 
\newcommand{\powtwohigh}{p^{\mathrm{h}}} 

\newcommand{\powall}{P}
\newcommand{\powvec}{\mathbf{p}}
\newcommand{\powsopt}{q^{\star}}
\newcommand{\powscopt}{p^{\mathrm s \star}} 
\newcommand{\powropt}{p^{\mathrm r \star}}
\newcommand{\powallopt}{p^{\mathrm{all \star}}}
\newcommand{\chansr}{\gamma^{\mathrm{sr}}}
\newcommand{\chansd}{\gamma^{\mathrm{sd}}}
\newcommand{\chanrd}{\gamma^{\mathrm{rd}}}
\newcommand{\chanoned}{\gamma^{\mathrm{1d}}}
\newcommand{\chanonetwo}{\gamma^{\mathrm{12}}}
\newcommand{\chanstwo}{\gamma^{\mathrm{s2}}}
\newcommand{\chanmax}{\gamma^{\mathrm{d}}} 
\newcommand{\chanall}{{\boldsymbol{\gamma}}}
\newcommand{\csi}{\mathbf{c}}
\newcommand{\state}{\mathbf{s}}
\newcommand{\action}{\mathbf{a}}
\newcommand{\statespace}{\mathcal{S}}
\newcommand{\actionspace}{\mathcal{A}}

\newcommand{\prob}{\mathrm{P}}

\newcommand{\sou}{\mathsf{S}}

\newcommand{\ami}{m}
\newcommand{\amir}{\ami^{\relay}}
\newcommand{\amid}{\ami^{\dest}}
\newcommand{\amiall}{\mathbf{\ami}}

\newcommand{\Bd}{B^{\dest}}
\newcommand{\Br}{B^{\relay}}
\newcommand{\Ball}{\mathbf{b}}

\newcommand{\snrall}{\bm{\snr}}
\newcommand{\snrsd}{\snr^{\source\dest}}
\newcommand{\snrrd}{\snr^{\relay\dest}}
\newcommand{\snrsr}{\snr^{\source\relay}}

\newcommand{\snrmax}{\widetilde{\snr}}
\newcommand{\snrmaxvec}{\widetilde{\bm{\snr}}}

\newcommand{\Rth}{R^{\small{\sf th}}}
\newcommand{\pth}{p^{\text{th}}}

\newcommand{\snrtr}{\snr_{\mathsf{trunc}}}

\newcommand{\snrave}{\bar{\snr}}
\newcommand{\Eone}[1]{\mathrm{E}_1\left(#1 \right)}

\newcommand{\snrhm}{\widehat{\gamma}}

\newcommand{\avesnrsd}{\bar{\snr}^{\source\dest}}
\newcommand{\avesnrrd}{\bar{\snr}^{\relay\dest}}
\newcommand{\avesnrsr}{\bar{\snr}^{\source\relay}}

\newcommand{\ksd}{c^{\source\dest}}
\newcommand{\krd}{c^{\relay\dest}}
\newcommand{\ksr}{c^{\source\relay}}

\newcommand{\Reff}{R_{\text{eff}}}

\begin{abstract}
In many communication scenarios, such as in cellular systems, the energy cost is substantial and should be conserved, yet there is a growing need to support many real-time applications that require timely data delivery.
To model such a scenario, in this paper we consider the problem of minimizing the expected sum energy of delivering a message of a given size from a source to a destination subject to a deadline constraint.
A relay is present and can assist after it has decoded the message.
Causal channel state information (CSI), in the form of present and past SNRs of all links, is available for determining the optimal power allocation for the source and relay.
We obtain the optimal power allocation policy by dynamic programming and explore its structure.
We also obtain conditions for which the minimum expected sum energy is bounded given a general channel distribution. In particular, we show that for Rayleigh and Rician fading channels, relaying is necessary for the minimum expected sum energy to be bounded. This illustrates the fundamental advantage of relaying from the perspective of energy efficient communications when only causal CSI is available.
Numerical results are obtained which show the reduction in the expected sum energy under different communication scenarios.
\end{abstract}
\begin{keywords}
energy-efficient wireless communications, relaying, deadline, mutual information accumulation, dynamic programming.
\end{keywords}

\section{Introduction}

\subsection{Background}
In delay-sensitive applications, such as multimedia streaming, information is considered to be useful only if it is delivered within a deadline.
This communication scenario is modeled in \cite{Fu06, LeeJindal_TransWCOM, LeeJindal_arXiv} as slotted transmissions with {\em deadline}, i.e.,  a fixed number of bits are to be delivered over $K\geq 1$ slots by a source $\source$ to a  destination $\dest$. Prior to every transmission, {\em causal channel state information} (CSI), consisting of only the past and present slots (but not of future slots), is assumed to be known to $\source$.
The problem of minimizing the sum transmission energy, by allocating energy dynamically over $K$ slots, was first formulated as a finite-horizon dynamic program in \cite{Fu06}. The problem was then specialized to the case where the energy-bit relationship is governed by the AWGN channel capacity formula in \cite{LeeJindal_TransWCOM}.
For such problems with availability of only causal CSI, analytical closed-form solutions are typically not available \cite{Fu06, LeeJindal_TransWCOM}.
The optimality of some scheduling policies was proved in some asymptotic regimes in \cite{LeeJindal_arXiv}.
In other related works, different deadlines were considered for packets that arrived separately in time \cite{NeelyWN09}, while a continuous-time framework was considered in  \cite{ZaferModianoTransACM09}.

Advanced relaying techniques, especially cooperative relaying where the source and relay transmit jointly, have been shown to increase the achievable rate \cite{KramerNow}, reduce the error probability \cite{LanemanWornell03}, or minimize energy \cite{AdachiSunHo12,ChenSerbetliYener08,ZhouCui08,RajanTransVT07,LongbiXiaojunShroffTransNW,NeelyISIT07,hoICC10}. The results in \cite{AdachiSunHo12,KramerNow, LanemanWornell03, ChenSerbetliYener08,ZhouCui08} correspond to the single-slot system where $K=1$. 
In \cite{RajanTransVT07,LongbiXiaojunShroffTransNW,NeelyISIT07,hoICC10}, energy minimization was performed over  multiple slots, allowing further gains to be achieved by optimizing over multiple slots.
In \cite{RajanTransVT07,LongbiXiaojunShroffTransNW}, close-to-optimal schemes were obtained subject to a long-term average delay constraint, but this can lead to a large delay for some message bits.
In \cite{NeelyISIT07}, the optimal scheduler was obtained 
with the assumption that previously received packets were not used for joint decoding.
%
Recently in \cite{hoICC10}, we considered the problem of energy minimization subject to a deadline, assuming causal CSI is available based on a decode-and-forward relaying scheme \cite{KramerNow}.
We have considered {\em intra-slot} relaying,  where {\em every} slot is divided into Phase~1 when the relay listens and Phase~2 when the relay can transmit. Although the power allocation and the phase durations were optimized jointly over all slots, channel coding and decoding was performed independently for each slot, i.e., earlier received packets were not used for joint decoding by the receivers.
If all received packets are instead jointly decoded by all receivers, as in \cite{hoVTC12, Draper11,UrgaonkarNeely11}, then we can realize {\em mutual-information accumulation} (MIA). MIA increases the achievable rate and conversely saves transmission power for a given transmission rate.
The effect of MIA is similar to the use of an ARQ scheme that employs incremental redundancy via retransmissions \cite{Caire01}, except that retransmissions are now deliberate and both $\relay$ and $\dest$ perform decoding.
In these works \cite{hoVTC12,Draper11,UrgaonkarNeely11}, {\em full CSI}, consisting of the past, present and future channel states, is assumed to be available for rate and power allocation. Full CSI, however, may not be available in practice.

\subsection{Problem}

\begin{figure}
\centering
\includegraphics[scale=0.85]{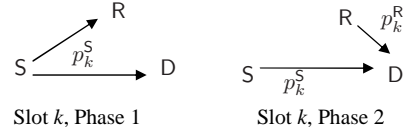}
\label{fig:relaying}
\caption{A two-phase decode-and-forward protocol consisting of $k=1,\cdots,K$ slots. In Phase~1, source transmits with power $\powsc_k$ and relay decodes. In Phase~2, source and relay transmit with power $\powr_k$ and $\pows_k$, respectively.
}
\label{fig:system}
\end{figure}

In this paper, we consider a slotted system where a relay $\relay$ helps source $\source$ to deliver a given number of bits to $\dest$ within a deadline constraint of $K$ slots, based on the two-phase decode-and-forward protocol shown in Fig.~\ref{fig:system}.
Different from the literature, we consider:
(i) only causal CSI is available for power (and hence also rate) allocation, (ii) {\em inter-slot relaying}, i.e., the transition from Phase~1 to Phase~2 occurs only at slot boundaries, and (iii) MIA, i.e., both $\relay$ and $\dest$ use all received packets for joint decoding.
The motivation of using causal CSI and inter-slot relaying is that it is readily implemented in practice, while the motivation of using MIA is to reduce the transmission energy by the use of advanced coding techniques.

We seek to minimize the expected sum energy, by choosing the transmission power of each node dynamically slot by slot.

\subsection{Contributions}

Solving the problem in closed-form is challenging, as the causality of the CSI implies that the present power allocation probabilistically affects the future slots, and the use of MIA creates a further dependence of the power allocation over time.
Nevertheless, we obtain the optimal power allocation policy by dynamic programming, and explore the structural and analytical properties to provide interesting fundamental insights.
Our specific contributions are as follows.
\bi
\item We formulate our problem using a system-level state diagram that depends on whether the relay is active, i.e., if the system is in Phase~1 or 2 as shown in Fig.~\ref{fig:system}.
    This allows us to obtain the optimal power allocation policy via dynamic programming.
\item We obtain conditions for which the minimum expected sum energy is bounded given the channel distribution. In particular, we show that for Rayleigh and Rician fading channels, relaying is {\em necessary} for the minimum expected sum energy to be bounded, i.e., without relaying, the expected sum energy is unbounded even with an optimal power allocation. This points to the fundamental advantage of relaying from the perspective of energy efficient communications when only causal CSI is available.
\item Closed-form results are obtained for specific cases to reveal interesting insights. In particular, our problem of power allocation is related to the problem of deciding at which slot the relay should be activated (by making the relay decode the message). For $K=2$ slots, this problem is solved by minimizing a piecewise convex objective function.
\item Numerical results are obtained for different scenarios to show the potential energy savings, suggesting that most of the savings are recovered with $K=2$.
We propose a heuristic policy with provably bounded expected sum energy; for $K=2$, it performs close to the optimal scheme for small rates.
\ei

This paper is organized as follows. First, Section~\ref{sec:model} gives the system model and formulates the problem of minimizing the expected sum energy.
Section~\ref{sec:causalCSI} solves this problem and analyzes properties of the optimal solution.
Section~\ref{sec:boundedpower} obtains general conditions for which the expected sum energy is bounded.
Numerical results and comparisons are given in Section~\ref{sec:numerical}.
Finally, Section~\ref{sec:con} concludes the paper.

\section{System Model}\label{sec:model}

A source $\source$ delivers a message $W$ of $nB$ nats over $K\in\mathbb{Z}^{+}$ slots to a destination $\dest$ with the help of a relay $\relay$. Each slot consists of $n$ channel uses over time. Thus, the effective transmission rate is $\Reff=B/K$ nats per channel use. Each node has one antenna and is subject to the half-duplex constraint.

\subsection{Coding Scheme}
The source $\sou$ uses independent Gaussian codebooks for each slot $k$.
For every message $W\in\{1,\cdots, e^{nB}\}$,  a codeword $\mathbf{x}^{\mathsf{S}}_k(W)$ of length $n$ is generated at $\sou$ according to $\mathcal{CN}(\mathbf{0}_n,\mathbf{I}_n)$, which denotes the independent and identically distributed (i.i.d.) complex-valued zero-mean unit-variance $n$-variate Gaussian probability density function (PDF).
The relay independently generates its codebook, with codewords $\{\mathbf{x}^{\mathsf{R}}_k(W)\}$, similarly.

\subsection{Relaying Protocol}

\begin{figure}
\centering
\includegraphics[scale=1]{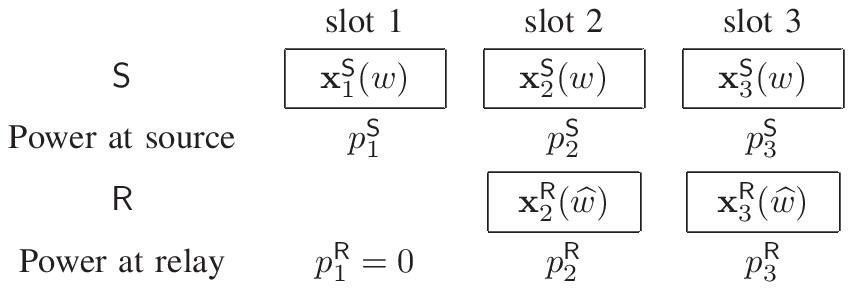}
\caption{Protocol for $K=3$, assuming the relay decodes the message $w$ as $\widehat{w}$ in slot $\widetilde{K}=1$.}
\label{fig_df_fulldup}
\end{figure}

We divide the relaying protocol into two phases, as shown in Fig.~\ref{fig:system}.
We employ a decode-and-forward relaying scheme where each receiver ($\relay$ or $\dest$) can accumulate mutual information over slots.
An example of the power allocation for $K=3$ slots is shown in Fig.~\ref{fig_df_fulldup}.

We assume the wireless communication scenario where the link SNR is constant in each slot, denoted by the positive scalar $\snr$, while the channel phases may vary over $n$ channel uses in each slot, denoted by the length-$n$ vector $\bm{\theta}$.
This assumption is reasonable in practice as communication systems are typically designed such that, over a packet or slot duration, the SNR is constant, while the channel phases may change substantially due to, for example, unavoidable offsets in the carrier frequencies.

In general, for any $\mathsf{Y}$-to-$\mathsf X$ link in slot $k$, let $\snr^{\mathsf{YX}}_k\geq 0$ be the channel SNR and let $\bm{\Theta}_k^{\mathsf{YX}}$ be a diagonal matrix of the $n$ channel phases $\bm{\theta}_k^{\mathsf{YX}}$.
Every receiver~$\mathsf{X}$ has knowledge of both $\snr^{\mathsf{YX}}_k$ and $\bm{\theta}_k^{\mathsf{YX}}$, such as via appropriate training with preambles sent in the packet header.
Since the link SNRs, but not their phases, change very slowly in our assumption, in practice the SNRs can be fed back in time by the receivers.
Hence, we assume that every transmitter~$\mathsf{Y}$ has knowledge of $\snr^{\mathsf{YX}}_k$ but not $\bm{\theta}_k^{\mathsf{YX}}$. This knowledge is used for power allocation in Section~\ref{sec:probform}.

\subsubsection{Phase~1}

Consider Phase~1, before the relay has decoded the message $W$.
The source transmits the codeword with transmission power $\pows_k$.
The codeword is then received at node  $\mathsf{X}\in\{\mathsf{R}, \mathsf{D}\}$, which denotes the relay or the destination respectively, as
\be\label{eqn:phaserx}
\mathbf{y}^{\mathsf{X}}_k = \sqrt{p^{{\mathsf S}}_k \snr^{\mathsf{SX}}_k }  \; \bm{\Theta}_k^{\sou\mathsf{X}}  \; \mathbf{x}^{\mathsf{S}}_k + \mathbf{v}^{\mathsf{X}}_k
\ee
where $\mathbf{v}^{\mathsf{X}}_k\sim\mathcal{CN}(\mathbf{0}_n,\mathbf{I}_n)$ is the additive white Gaussian noise (AWGN).
Using all past received packets $\{\mathbf{y}^{\mathsf{R}}_i, i=1,\cdots,k\}$ for joint typical decoding, the relay reliably decodes message $W$  if \cite{Cover06}
\be\label{eqn:sisodecode}
\sum_{i=1}^{k} \mi(\pows_i \snrsr_i)\geq B
\ee
where $\mi(x)\triangleq \log(1+\snr)$ is the mutual information function for point-to-point Gaussian channels with received SNR $\snr\geq 0$ and $\log$ is the natural logarithm.
We note that $\mi(\cdot)$ is a strictly increasing concave function.
Thus, the mutual information accumulates over slots until it reaches or exceeds $B$, upon which the relay reliably decodes $W$.

\subsubsection{Phase~2}
We denote $\widetilde{K}$ as the earliest possible slot index such that \re{eqn:sisodecode} holds, where $1\leq \widetilde{K}\leq K-1$; if $\widetilde{K}\geq K$, relaying is not possible and so Phase~2 is not activated.
Consider Phase~2 that runs from slot $k=\widetilde{K}+1$ to $k=K$,  in which the relay can perform relaying. 
Both $\source$ and $\relay$ transmit concurrently using their independent codebooks.
The destination thus receives
\be\label{eqn:phaserx}
\mathbf{y}^{\mathsf{D}}_k =  \sqrt{\pows_k \snrsd_k} \; \bm{\Theta}_k^{{\sou\dest}} \; \mathbf{x}^{\mathsf{S}}_k  + \sqrt{\powr_k\snrrd_k}\; \bm{\Theta}_k^{\relay\dest} \; \mathbf{x}^{\mathsf{R}}_k   + \mathbf{v}^{\mathsf{D}}_k
\IEEEeqnarraynumspace
\ee
for $\widetilde{K}+1\leq k\leq K$.
Since typically the channel phases can change significantly, we assume the extreme case of a {\em phase fading channel} \cite{KramerNow}, i.e.,  each element in $\bm{\theta}^{\sou\dest}$ and $\bm{\theta}^{\relay\dest}$ is  i.i.d. with a uniform PDF over $[0,2\pi)$.
Using {\em all} received packets $\{\mathbf{y}^{\mathsf{D}}_k, k=1,\cdots,K\}$ for joint typical decoding, the destination reliably decodes message $W$  if \cite{KramerNow}
\be\label{eqn:relaydecode}
\sum_{k=1}^{\widetilde{K}} \mi(\pows_k\snrsd_k) + \sum_{k=\widetilde{K}+1}^{K} \mi(\pows_k \snrsd_k + \powr_k \snrrd_k)\geq B.
\ee

\begin{remark}
To obtain \eqref{eqn:relaydecode}, we use independent codebooks at $\source$ and $\relay$, and assume a phase fading channel.
Alternatively, suppose we use an identical (Gaussian) codebook at both nodes followed by a distributed space-time block code%
\footnote{In conventional distributed space-time block codes, the power is fixed, but we shall optimize the power given causal CSI. We shall subsequently show that it is optimal that only one node transmits and so there is, in fact, no need to use a distributed space-time block code.}
\cite{LanemanWornell03}. Then, regardless of the PDF of the phases, the SNRs combine as $\pows_k \snrsd_k + \powr_k \snrrd_k$ in the second phase and so we obtain \eqref{eqn:relaydecode} too.
Hence our subsequent result applies equally for both cases.
\end{remark}

\subsection{Problem Statement}\label{sec:probform}

We consider a general problem formulation where for slot $k\in\mathcal{K}\triangleq \{1,2,\cdots,K\}$, some CSI $\state_k$ is available in some CSI space ${\mathcal{S}}_k$.
Let $\pow_k= (\pows_k, \powr_k)$ be the power allocation and $\snrall_k=( \snrsr_k, \snrsd_k, \snrrd_k)$ be the link SNRs in slot $k$.
We denote a {\em power allocation policy} $\pi\in \Pi$ as a mapping of the CSI $\state_k\in{\mathcal{S}}_k$ to the non-negative power allocation $\pow_k$ for all $k\in\mathcal{K}$, where $\Pi$ is the space of all feasible policies,
i.e.,
\be\label{eqn:policy}
\pi=\{\pow_k(\state_k)\geq \mathbf{0}, \forall \state_k\in {\mathcal{S}}_k, k\in \mathcal{K}\}.
\ee

Our problem is to find the optimal power allocation policy $\pi^{\star}$ so as to minimize the expected sum energy, subject to delivering $B$ nats per channel use in $K$ slots based on the described relaying protocol.
We assume an initial CSI $\state_1$ is (arbitrarily) given.
Mathematically, the optimization problem is given by Problem $P0$:
\be\label{p0}\IEEEyessubnumber
\min_{\pi\in \Pi}  &\;\;& \mathbb{E}\left[ \sum_{k=1}^Kp^{\sou}_{k} +  p^{\relay}_{k} \Big | \state_1 \right] \\ \IEEEyessubnumber \label{p0:c1}
\mbox{\; s.t. } && \widetilde{K} = \arg \min \left \{ 1\leq k \leq K : \sum_{i=1}^{k} \mi(\pows_i \snrsr_i)\geq B \right\}
 \IEEEeqnarraynumspace
\\ \IEEEyessubnumber \label{p0:c2}
 &&  \sum_{k=1}^{\widetilde{K}} \mi(\pows_k\snrsd_k) + \sum_{k=\widetilde{K}+1}^{K} \mi(\pows_k \snrsd_k + \powr_k \snrrd_k)\geq B
\ee
where $\mathbb{E}(\cdot)$ is the expectation over all random variables (conditioned on the initial CSI $\state_1$).
The inequalities in \re{p0:c1} and \re{p0:c2} are due to \re{eqn:sisodecode} and \re{eqn:relaydecode}, to ensure that the message is decoded by the relay and the  destination at the end of $\widetilde{K}$ and $K$ slots, respectively.

In the next section, we shall solve Problem~$P0$ for the specific case where causal CSI is available.

\section{Optimal Power Allocation Policy}\label{sec:causalCSI}

We first formulate Problem~$P0$ equivalently as Problem~$P1$ given causal CSI is available in Section~\ref{sec:equiv_prob}. Then we define a system-level state diagram in Section~\ref{sec:state_diag}, which solves Problem~$P1$ by dynamic programming in Section~\ref{sec:DP}. We provide further analysis in Section~\ref{sec:K=2} for $K=2$ slots.

For convenience, we denote a collection of scalars or vectors with running subscripts as $x_i^j=(x_k,x_{k+1},\cdots, x_j)$ where $k=\max\{1,i\}$ if $i\leq j$ and $j\geq 1$; otherwise, we let $x_i^j=\emptyset$.

\subsection{Equivalent Problem}\label{sec:equiv_prob}

In this paper, we assume that at the start of each slot $k\in\mathcal{K}$, causal CSI is available at all nodes, in terms of the present and past link SNRs $\snrall_1^k\triangleq (\snrall_1,\snrall_{2},\cdots, \snrall_k)$.

Before we specialize to the causal CSI case, we make the following simplification in Lemma~\ref{lem:a}.



\begin{lemma}\label{lem:a}
Suppose that the CSI $\state_k$ is available in slot $k\in \mathcal{K}$.
If the present link SNRs $\snrall_k$ is a deterministic function of $\state_k$ for $k\in \mathcal{K}$,
then there is no loss in optimality in solving Problem~$P0$ if only the {\em stronger node}, i.e., either $\source$ or $\relay$ with the larger (or the same) SNR to $\dest$, transmits in each slot in Phase~2.
\end{lemma}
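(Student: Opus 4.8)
The plan is to use a sample-path reallocation (interchange) argument. Fix an arbitrary feasible policy $\pi\in\Pi$, with induced powers $(\pows_k,\powr_k)$ and decoding slot $\widetilde{K}$. The first thing I would observe is that a modification confined to the Phase~2 slots $k=\widetilde{K}+1,\dots,K$ cannot affect $\widetilde{K}$ nor the Phase~1 contributions to \re{p0:c1}--\re{p0:c2}, because those depend only on the Phase~1 source powers $\pows_1,\dots,\pows_{\widetilde{K}}$. Moreover, in each Phase~2 slot the two powers enter the problem through exactly two scalars: the combined effective SNR $c_k\triangleq\pows_k\snrsd_k+\powr_k\snrrd_k$, which is the only quantity appearing in the corresponding term $\mi(c_k)$ of \re{p0:c2}, and the per-slot energy $\pows_k+\powr_k$ in the objective of \re{p0}.

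The core step is then a two-variable linear program: for fixed $(\snrsd_k,\snrrd_k)$ and fixed $c_k$,
\[
\min_{\pows_k,\,\powr_k\ge 0}\; \pows_k+\powr_k
\quad\text{subject to}\quad
\pows_k\snrsd_k+\powr_k\snrrd_k=c_k .
\]
Its feasible set is the segment joining $(c_k/\snrsd_k,0)$ and $(0,c_k/\snrrd_k)$, and since the objective is linear the minimum is attained at the endpoint corresponding to the node with the larger SNR to $\dest$, with optimal value $c_k/\max\{\snrsd_k,\snrrd_k\}$. As the pair $(\pows_k,\powr_k)$ produced by $\pi$ already lies on this segment, reallocating all of slot~$k$'s power onto the stronger node leaves $c_k$, and hence the term $\mi(c_k)$, unchanged while not increasing $\pows_k+\powr_k$.

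I would then define the modified policy $\pi'$ to coincide with $\pi$ in Phase~1 and, in each Phase~2 slot, to output the reallocated single-node powers just described. Because $\pi$ is a policy, $(\pows_k,\powr_k)$ is a deterministic function of $\state_k$, and by the hypothesis of Lemma~\ref{lem:a} so is $\snrall_k$, in particular $(\snrsd_k,\snrrd_k)$; hence $c_k$ and the identity of the stronger node are deterministic functions of $\state_k$, so $\pi'\in\Pi$. Feasibility is immediate: on every sample path $\widetilde{K}$ is unchanged, \re{p0:c1} still holds, and each term on the left-hand side of \re{p0:c2} is unchanged. The objective does not increase slot by slot, hence $\mathbb{E}\!\left[\sum_{k}\pows_k+\powr_k\mid\state_1\right]$ does not increase; taking $\pi$ to be an optimal policy then yields an optimal policy of the claimed form.

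The step I expect to require the most care is the realizability of $\pi'$ as a (causal) policy, namely that ``the relay has already decoded'' --- the condition that flags slot~$k$ as belonging to Phase~2 --- is itself determined by $\state_k$. For the causal-CSI case of interest this holds because $\state_k=\snrall_1^k$, which together with the fixed Phase~1 rule makes $\sum_{i<k}\mi(\pows_i\snrsr_i)$, and therefore the current phase, a function of $\state_k$. It is equally worth spelling out why the reallocation must be restricted to Phase~2: applying the single-node rule in a Phase~1 slot could set $\pows_k=0$ whenever $\snrrd_k>\snrsd_k$, starving the relay of accumulated mutual information and violating the relay-decoding deadline \re{p0:c1}. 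Everything else reduces to the elementary linear-program fact above together with the monotonicity of $\mi(\cdot)$.
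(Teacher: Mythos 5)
Your proposal is correct and rests on the same core observation as the paper's proof: for a fixed combined received SNR $c_k=\pows_k\snrsd_k+\powr_k\snrrd_k$ (which is all that enters \re{p0:c2} in Phase~2), the per-slot energy $\pows_k+\powr_k$ is a linear function on the feasible segment and is minimized by putting all power on the node with the larger SNR to $\dest$. The paper phrases this as a contradiction applied to a presumed optimal allocation, while you phrase it as an explicit reallocation of an arbitrary feasible policy with the measurability and $\widetilde{K}$-invariance details spelled out, but the argument is essentially the same.
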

\begin{proof}
See Appendix~\ref{append:proof:a}.
\end{proof}

The causal CSI $\state_k$ explicitly includes the SNR $\snrall_k$. Thus, Lemma~1 trivially applies. This observation us to reduce the number of power allocation variables, as follows.
Let $\Br_k$ be the additional amount of  mutual information required by $\relay$ to reliably decode the message in slot $k$.
\bi
\item If $\Br_k\leq 0$ (i.e., slot~$k$ is in Phase~2), only the stronger node transmits. We denote the transmission power as $p_k\geq 0$, with the corresponding link SNR given by $\snrmax_k \triangleq \max(\snrsd_k, \snrrd_k)$.
\item If $\Br_k> 0$ (i.e., slot~$k$ is in Phase~1), we re-write $\pows_k$ as $p_k$ with the SNR still as $\snrsd_k$.
\ei
Henceforth, there is only one scalar power $p_k$ to be optimized for every slot, and the policy \re{eqn:policy} simplifies as
\be\label{eqn:policy1}
\pi=\{p_k(\state_k)\geq 0, \forall \state_k\in {\mathcal{S}}_k, k\in \mathcal{K}\}.
\ee
Thus Problem~$P0$ is equivalent to Problem~$P1$, given by
\be\label{p1}\IEEEyessubnumber
\hspace{-0.25cm}
\min_{\pi\in \Pi}  &\;\;& \mathbb{E}\left[\sum_{k=1}^K p_k \Big | \state_1 \right] \\ \IEEEyessubnumber \label{p1:c1}
\hspace{-0.25cm}
\mbox{\; s.t. } 
 &&  \sum_{k=1}^{K} \mi\bigg(p_k \Big(\snrsd_k \mathds{1}(\Br_k>0) + \snrmax_k \mathds{1}(\Br_k\leq 0)\Big)\bigg)  \geq B
\IEEEeqnarraynumspace
\ee
where
$\mathds{1}(\cdot)$ is an indicator function that takes the value of $1$ if the condition in the argument is true and  $0$ otherwise.
Note that via power allocation, we also implicitly determine $\widetilde{K}$ as defined in \re{p0:c2}.


\subsection{State Diagram}\label{sec:state_diag}

Similar to $\Br_k$, let $\Bd_k$ be the additional amount of mutual information required by $\dest$ to reliably decode the message in slot $k$. As $k$ increases, both $\Br_k$ and $\Bd_k$ decrease or remain the same as they receive more mutual information.
We assume $\Br_1=\Bd_1=K\Reff$, meaning that no mutual information is available for both nodes initially; it is easy to extend to the general case where each has some initial side information.
From \re{eqn:sisodecode} and \re{eqn:relaydecode}, we have
\be\label{eqn:ami}
\IEEEyessubnumber\label{eqn:amir}
\Br_k&=&
\left\{
  \begin{array}{ll}
    \Br_{k-1} - \mi(p_{k-1} \snrsr_{k-1} ), & \Br_{k-1}>0 \mbox{ (Phase~1)}; \\
    \Br_{k-1}, & \Br_{k-1}\leq 0 \mbox{ (Phase~2)}
  \end{array}
\right.
\\
\IEEEyessubnumber\label{eqn:amid}
\Bd_k&=&
\left\{
  \begin{array}{ll}
    \Bd_{k-1} - \mi(p_{k-1} \snrsd_{k-1} ), & \Br_{k-1}>0 \mbox{ (Phase~1)}; \\
    \Bd_{k-1}-\mi(p_{k-1} \snrmax_{k-1} ), & \Br_{k-1}\leq 0 \mbox{ (Phase~2)}
  \end{array}
\right.
\IEEEeqnarraynumspace
\ee
for $k\geq 2$.
%
We note that \re{eqn:amir} reflects that the relay does not accumulate mutual information in Phase~2 as it has decoded the message, while \re{eqn:amid} reflects that if the relay can start to perform relaying in Phase~2, then the equivalent SNR is given by $\snrmax_{k-1}$ instead of $\snrsd_{k-1}$.

\renewcommand{\state}{\mathbf{s}}

Besides $\bm{\snr}_1^k$, let  $\Ball_k\triangleq (\Br_k, \Bd_k)$ also be available as causal CSI; there is no loss of generality, since $\Ball_k$ can be calculated from
the past power allocation based on the given policy $\pi$.
Thus, the causal CSI for slot $k$ is
\be\label{eqn:state}
\state_k =(\bm{\snr}_1^k, \Ball_k) \in \mathcal{S}_k
\ee
where $\mathcal{S}_k$ is the corresponding causal CSI space such that $\bm{\snr}_1^k$ and $\Ball_k$ are non-negative.
This definition of the (extended) CSI is useful for us to define the state diagram next.


\newcommand{\termstate}{\mathsf{T}}

\begin{figure}
\centering
\includegraphics[scale=1]{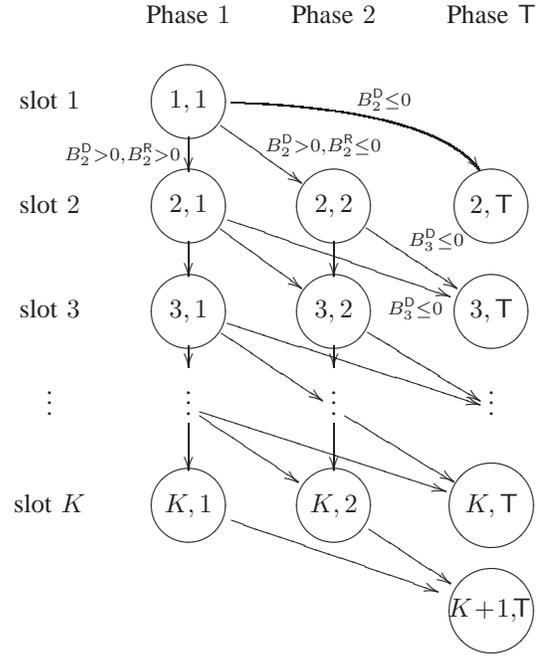}
\caption{All possible transitions of system state $(k,\phi_k)$ for slot $k$ in Phase~$\phi_k$. A transition to $(k+1,\termstate)$ that results in termination occurs if $\Bd_k\leq 0$. Otherwise, a transition to $(k+1,2)$ at Phase~2 occurs at the first occurrence of $\Br_k\leq 0$.}
\label{fig:transitions2} 
\end{figure}

We introduce the termination phase $\termstate$ to indicate that $\Bd\leq 0$, i.e., $\dest$ has decoded the message.
Consider a slot $k\in\mathcal{K}$ in Phase $\phi_k\in\{1,2, \termstate\}$.
We refer to $(k,\phi_k)$ as the {\em system state} and the corresponding causal CSI $\state_{k}$ as its state value%
\footnote{The state value should be denoted as $\state_{k, \phi_k}$ to reflect its full dependence. We maintain the notation $\state_{k}$ for notational convenience, and also because the state value $\state_{k}$  contains $\Br_k$ and $\Bd$ which completely determine the phase $\phi_k$.}.
The transitions of the system states occur as follows, see Fig.~\ref{fig:transitions2}. 
%

\bi
\item Suppose slot $k=1$. The slot must be in Phase~1, hence the system states $(1,2)$ and  $(1,\termstate)$ are empty.
\item Suppose $\Bd_{k+1}\leq 0$. The system state $(k,\phi_k)$ transits to $(k+1,\termstate)$  and the transmission ends.
\item Suppose $\Bd_{k+1}> 0$ and the slot $k\geq 1$ is in Phase~1. The system state $(k,1)$ transits to $(k+1,2)$  if $\Br_{k+1}\leq 0$, as the relay has reliably decoded the message after slot $k$;  otherwise the system state transits to $(k+1, 1)$ and remains in Phase~1.
\item Suppose $\Bd_{k+1}> 0$  and the slot $k>1$ is in Phase~2. The system state $(k,2)$ always transits to $(k+1,2)$ as it stays in Phase~2.
\ei

\begin{remark}\label{rem:markov3}
The state diagram, and hence the subsequent results, apply also to the case where no relaying is performed. This is done by decreasing the link SNR $\snrsr_k$ from the source to relay to zero, in which case Phase~2 will never occur.
\end{remark}


\subsection{Dynamic Programming}\label{sec:DP}

The link SNRs $\snrall_1^K$ are treated in general as random variables with PDF $f_{\snrall_1^K}(\cdot)$.
We treat $p_{k}$ and $\state_k$ also as random variables, since they depend implicitly on the link SNRs via the policy $\pi$.

Theorem~\ref{lem:2} solves Problem~$P1$. We use the Bellman's principle of optimality \cite{Bertsekas}, in which the optimal policy can be obtained by recursively solving a set of so-called Bellman's equation (viz. \eqref{eqn:Jkall}) with the same problem structure (viz. a minimization in \eqref{eqn:Jk}).
It is convenient and intuitive that the indexing of Bellman's equation in \eqref{eqn:Jkall} is two-dimensional over $(k,\phi_k)$, which follows naturally from the  two-dimensional state diagram in Fig.~\ref{fig:transitions2}.
In contrast, if there is no relaying, the indexing is one-dimensional \cite{LeeJindal_TransWCOM,LeeJindal_arXiv}.

We denote the inverse of the mutual information function as $\mi^{-1}(x)\triangleq \exp(x)-1, x \geq 0.$

\begin{theorem}\label{lem:2}
Given causal CSI $\state_1$ at slot~1, the minimum expected sum energy in Problem~$P1$ is given by $J_{1,1}(\state_1)$, which is computed recursively for decreasing $k=K, K-1,\cdots, 1,$ as follows:
\be
\label{eqn:Jkall}
\nonumber
J_{K, \phi_K}(\state_K)
&=& \min_{p_k\geq 0: \Bd_{K+1} \leq 0}  \; p_k \\
\IEEEyessubnumber\label{eqn:JK}
&=& \left \{
\begin{array}{cc}
\frac{\mi^{-1}(\Bd_{K})}{\snrsd_K}, &\phi_K=1; \\ 
\frac{\mi^{-1}(\Bd_{K})}{\snrmax_K}, & \phi_K=2,
\end{array}
\right .
\\
\nonumber
J_{k,\phi_k}(\state_k) &=& \min_{p_k\geq 0}  \; p_k  + \mathbb{E}_{\phi_{k+1}, \state_{k+1}}[J_{k+1,\phi_{k+1}}(\state_{k+1} ) | \phi_k, \state_k],
\\ 
&&\hspace{2.1cm}  \mbox{for } k\in\mathcal{K}\backslash K, \phi_k\in\{1,2\},
\IEEEyessubnumber\label{eqn:Jk}
\\
\IEEEyessubnumber\label{eqn:J_termstate}
J_{k+1,\termstate}(\state_k) &=&0,
\hspace{1.78cm} \mbox{for } k\in\mathcal{K}.
\ee
An optimal policy $\pi^{\star}$ that achieves $J_{1,1}(\state_1)$ is given by the power allocation ${p}^{\star}_k(\state_{k})$ that solves \eqref{eqn:JK} and \eqref{eqn:Jk} for all possible%
\footnote{To reduce computations, we need only consider a state value $\state_{k}$ if it is relevant to its system state $(k,\phi_k)$. For example, if $\phi_k=2$, i.e., the system is in Phase~2,  then $\Br_k$ (related only to the relay) is no longer relevant to future states and can be ignored.}
system states $(k, \phi_k)$ and state values $\state_k$.

\end{theorem}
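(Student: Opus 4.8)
The plan is to cast Problem~$P1$ --- which, by Lemma~\ref{lem:a}, already involves only the single scalar power $p_k$ per slot --- as a finite-horizon Markov decision process and then invoke Bellman's principle of optimality \cite{Bertsekas}. The first and most delicate step is to argue that, with the augmented CSI $\state_k=(\snrall_1^k,\Ball_k)$ of \re{eqn:state} together with the phase label $\phi_k$, the triple $(k,\phi_k,\state_k)$ is a sufficient statistic of the past for future cost, i.e.\ a bona fide Markov state. Given $(k,\phi_k,\state_k)$ and the action $p_k\ge 0$: the deficits $\Ball_{k+1}=(\Br_{k+1},\Bd_{k+1})$ are a deterministic function of $(\Ball_k,p_k,\snrall_k)$ through \re{eqn:amir}--\re{eqn:amid}; the next phase $\phi_{k+1}$ is the deterministic function of $(\Br_{k+1},\Bd_{k+1})$ dictated by the transition rules of Fig.~\ref{fig:transitions2} (termination if $\Bd_{k+1}\le 0$, else Phase~2 at the first occurrence of $\Br_{k+1}\le 0$); and the only exogenous randomness is $\snrall_{k+1}$, whose conditional law given the entire history is $f_{\snrall_{k+1}|\snrall_1^k}$, depending on the past only through $\snrall_1^k\subseteq\state_k$. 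Hence the conditional distribution of $(k+1,\phi_{k+1},\state_{k+1})$ given the past depends only on $(k,\phi_k,\state_k)$ and $p_k$, the stage cost $p_k$ is additive, and the deadline is a terminal event.

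Next I would encode the decoding constraint \re{p1:c1} as a terminal restriction: it is equivalent to requiring $\Bd_{K+1}\le 0$, and if $\Bd_{k+1}\le 0$ occurs for some $k<K$ the system enters phase $\termstate$ and incurs no further cost, which is \re{eqn:J_termstate}. For the last slot, $\phi_K\in\{1,2\}$ and the feasible set $\{p_K\ge 0:\Bd_{K+1}\le 0\}$ equals $\{p_K:\mi(p_K\snr)\ge\Bd_K\}$ with $\snr=\snrsd_K$ if $\phi_K=1$ and $\snr=\snrmax_K$ if $\phi_K=2$; since $\mi(\cdot)$ is strictly increasing, the minimizing $p_K$ is $\mi^{-1}(\Bd_K)/\snr$ (with the conventions $\mi^{-1}(x)=0$ for $x\le 0$ and $p_K=+\infty$ when $\snr=0<\Bd_K$, which propagates consistently through the recursion), giving \re{eqn:JK}.

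With the Markov structure and boundary values established, Bellman's principle of optimality gives, for each $k\in\mathcal{K}\backslash K$, $\phi_k\in\{1,2\}$, and reachable state value $\state_k$, that the minimum expected residual energy from $(k,\phi_k,\state_k)$ is the minimum over $p_k\ge 0$ of the immediate cost $p_k$ plus the expected cost-to-go $\mathbb{E}_{\phi_{k+1},\state_{k+1}}[J_{k+1,\phi_{k+1}}(\state_{k+1})|\phi_k,\state_k]$ --- this is exactly \re{eqn:Jk}. A backward induction on $k=K,K-1,\dots,1$ then identifies $J_{1,1}(\state_1)$ as the optimal value of $P1$, and, at each reachable $(k,\phi_k,\state_k)$, any minimizer $p_k^\star(\state_k)$ of \re{eqn:JK}--\re{eqn:Jk} defines an optimal (Markov, state-feedback) policy $\pi^\star$.

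I expect the principal obstacle to be the sufficiency/Markov argument of the first step: one must be careful that augmenting the raw causal CSI $\snrall_1^k$ with the \emph{derived} quantities $\Ball_k$ --- which depend on the earlier \emph{actions}, not only on nature --- does not break the dynamic-programming decomposition, and that this augmented state remains a sufficient statistic for the future cost for an \emph{arbitrary} correlation structure of the SNR process across slots (only the one-step conditional density $f_{\snrall_{k+1}|\snrall_1^k}$ enters). The remaining points --- existence of the minimizers in \re{eqn:Jk}, measurability of $\state_k\mapsto p_k^\star(\state_k)$, and finiteness/lower semicontinuity of $J_{k,\phi_k}$ --- are routine for a finite horizon with the continuous, strictly increasing, unbounded $\mi(\cdot)$, and I would dispatch them with a brief semicontinuity remark or simply note them as standard.
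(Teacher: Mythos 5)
Your proposal is correct and follows essentially the same route as the paper's proof: both cast $P1$ as a finite-horizon dynamic program over the augmented state $(k,\phi_k,\state_k)$, handle the decoding constraint via the terminal phase $\termstate$ with zero cost, solve the last slot in closed form using the monotonicity of $\mi(\cdot)$, and apply Bellman's principle by backward recursion to obtain \eqref{eqn:Jk} and the optimal state-feedback policy. Your explicit verification that the action-dependent deficits $\Ball_k$ together with $\snrall_1^k$ form a sufficient statistic is a point the paper leaves implicit, but it does not change the argument.
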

\begin{proof}
See Appendix~\ref{proof:lem:2}.
\end{proof}

From Theorem~\ref{lem:2}, 
the minimum expected sum energy is given by $J_{1,1}(\state_1)$ that is computed by the Bellman's equation \eqref{eqn:Jkall}, while the power allocation that solves the Bellman's equation form the optimal policy $\pi^{\star}$.
In \re{eqn:Jk}, the first term $p_k$ is the energy used by the current slot, while the second term is the expected  energy accumulated by all future slots $k+1, \cdots, K$.
Thus, besides minimizing the energy used by the current slot, the optimal policy also accounts for the energy used in the future.

The complexity of solving the Bellman's equation via dynamic programming is P-complete, i.e., as hard as any problem with polynomial-time complexity, and so the solution likely cannot be obtained by highly parallel algorithms \cite{Papadimitriou}.
Hence, the (offline) {\em computational complexity} to obtain the optimal power allocation  policy $\pi^{\star}$ is fairly high. However, the (online) {\em implementation complexity} can be made low by first storing
$\pi^{\star}$ in a lookup table, which contains the mapping of all possible state values to the optimal power allocation. During online operation, the table is then used to allocate the power for every slot given the system state and state value.
In practice, the state values are first quantized before the optimal policy is obtained and stored (as is done to obtain numerical results in Section~\ref{sec:numerical}). The exact granularity of the quantization depends on the tradeoff between the complexity/storage of the policy and the expected sum energy.

Next, Theorem~\ref{thm:causalCSI} reveals exactly how $\pi^{\star}$ leads to the optimal transition from Phase 1 to Phase~2.
For full generality, we assume an $m$th-order Markovian channel where the PDF of the link SNRs factorize as
\be\label{eqn:pdfSNR}
f_{\snrall_1^K}(\snrall_1^K)= \prod_{k=1}^K f_{\snrall_k }(\snrall_k| \snrall_{k-m}^{k-1}).
\ee
If $m=K$, this corresponds to the most general case where the PDF of the link SNRs is arbitrary and thus need not be Markovian.
If $m=1$, this corresponds to the first-order Markovian channel, which is widely considered in the literature to reflect the dependence of the present link SNRs only with respect to the most recent past link SNRs, see e.g., \cite{SadeghiKennedyRapajicShams08,HoOostveenLinnartz09}.
If $m=0$, this corresponds to the case of i.i.d. link SNRs.

\begin{theorem}\label{thm:causalCSI}
Assume an $m$th-order Markovian channel where the PDF of the link SNRs is given by  \eqref{eqn:pdfSNR}.
To achieve the minimum expected sum energy in Problem~$P1$,
the Bellman equations \eqref{eqn:JK}, \re{eqn:J_termstate} hold while \eqref{eqn:Jk} can be solved equivalently as
\renewcommand{\state}{\mathbf{s}_k}
\be\label{thm:causalCSI:eqn:0}
\IEEEyessubnumber\label{thm:causalCSI:eqn:1}
J_{k,1}(\state) &=&
\min \{ J_{k,1\rightarrow 2}(\state), J_{k,1\rightarrow 1}(\state)\}
\\
\IEEEyessubnumber
\label{thm:causalCSI:Jk_1b} 
J_{k,1\rightarrow 1}(\state)& \triangleq&
\min_{0\leq R_k< \min\{\Rth_k, \Bd_k\} }  \frac{\mi^{-1}(R_k)}{\snrsd_k} +  J'(R_k)\\
\IEEEyessubnumber
\label{thm:causalCSI:Jk_1c}
J_{k,1\rightarrow 2}(\state) &\triangleq&  \min_{\min\{\Rth_k, \Bd_k\}\leq R_k\leq \Bd_k} \frac{\mi^{-1}(R_k)}{\snrsd_k}  + J''(R_k)
\IEEEeqnarraynumspace
\\
\IEEEyessubnumber\label{eqn:thm:causalCSI:eqn:2}
J_{k,2}(\state)
&=& \min_{0\leq R_k\leq \Bd_k}  \frac{\mi^{-1}(R_k)}{\snrmax_k}  + J'''(R_k)
\ee
where for a given $\state$, we denote
\ben
J'(R_k) &=& \mathbb{E}_{\snrall_{k+1}}[J_{k+1,1}(\snrall_{k+1}, \Ball_k-(g(R_k), R_k)) | \snrall_{k-m+1}^k] \\
J''(R_k)&=&\mathbb{E}_{\snrall_{k+1}}[J_{k+1,2}(\snrall_{k+1}, \Ball_k-(g(R_k), R_k)) | \snrall_{k-m+1}^k]\\
J'''(R_k)&=& \mathbb{E}_{\snrall_{k+1}}[J_{k+1,2}(\snrall_{k+1}, \Ball_k-(0, R_k)) | \snrall_{k-m+1}^k] \\
g(x)&=& \mi\left(\mi^{-1}(x) \snrsr_k /\snrsd_k \right), x\geq 0 \\
\Rth_k&=&\mi\left(\mi^{-1}(\Br_k) \snrsd_k /\snrsr_k \right).
\een
\end{theorem}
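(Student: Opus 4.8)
We only sketch the argument. The plan is to start from the Bellman recursion \eqref{eqn:Jk} of Theorem~\ref{lem:2} (equations \eqref{eqn:JK} and \eqref{eqn:J_termstate} are inherited unchanged) and perform a change of optimization variable from the transmit power $p_k$ to the mutual information (rate) $R_k$ that slot~$k$ delivers to $\dest$, after which the phase transition can be read off directly from the state updates \eqref{eqn:amir}--\eqref{eqn:amid} and the transition rules of Fig.~\ref{fig:transitions2}.

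First I would treat a Phase-1 slot $k<K$. There the source transmits to $\dest$ with received SNR $\snrsd_k$, so $R_k\triangleq\mi(p_k\snrsd_k)$ is a strictly increasing bijection of $p_k\in[0,\infty)$ onto $[0,\infty)$ with inverse $p_k=\mi^{-1}(R_k)/\snrsd_k$, using that $\mi$ is strictly increasing and continuous with $\mi(0)=0$. Substituting into \eqref{eqn:amir}--\eqref{eqn:amid} gives $\Bd_{k+1}=\Bd_k-R_k$ and $\Br_{k+1}=\Br_k-\mi(p_k\snrsr_k)=\Br_k-g(R_k)$ with $g$ as in the statement; hence $\Ball_{k+1}=\Ball_k-(g(R_k),R_k)$ is a deterministic function of $\mathbf{s}_k$ and $R_k$, so the only remaining randomness in $\mathbf{s}_{k+1}$ is $\snrall_{k+1}$, whose conditional law given $\snrall_1^k$ depends only on $\snrall_{k-m+1}^k$ by the factorization \eqref{eqn:pdfSNR}. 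This is exactly what collapses $\mathbb{E}_{\phi_{k+1},\mathbf{s}_{k+1}}[\cdot\mid\phi_k,\mathbf{s}_k]$ into $\mathbb{E}_{\snrall_{k+1}}[\cdot\mid\snrall_{k-m+1}^k]$ and produces $J'$, $J''$, $J'''$.

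Next I would classify $\phi_{k+1}$ from $\mathbf{s}_{k+1}$ using the transition rules: if $\Bd_{k+1}\le 0$ then $\phi_{k+1}=\termstate$ and the continuation cost is $0$ by \eqref{eqn:J_termstate}; otherwise, from a Phase-1 slot, $\phi_{k+1}=2$ precisely when $\Br_{k+1}\le 0$, i.e.\ $g(R_k)\ge\Br_k$, which by strict monotonicity of $g$ (hence invertibility on $[0,\infty)$) is equivalent to $R_k\ge\Rth_k$ with $\Rth_k=g^{-1}(\Br_k)=\mi(\mi^{-1}(\Br_k)\snrsd_k/\snrsr_k)$; and $\phi_{k+1}=1$ otherwise. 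One then checks that choosing $R_k>\Bd_k$ is never strictly beneficial: $p_k$ is increasing in $R_k$ and the continuation cost once $\Bd_{k+1}\le 0$ is $0$, so $R_k=\Bd_k$ dominates all of $(\Bd_k,\infty)$ and the feasible set may be capped at $[0,\Bd_k]$. Splitting $[0,\Bd_k]$ at $\min\{\Rth_k,\Bd_k\}$ then gives exactly the two branches: $J_{k,1\to 1}$ on $[0,\min\{\Rth_k,\Bd_k\})$, where $\phi_{k+1}=1$; and $J_{k,1\to 2}$ on $[\min\{\Rth_k,\Bd_k\},\Bd_k]$, where either $\phi_{k+1}=2$, or $\Bd_{k+1}=0$, in which case $J_{k+1,2}$ evaluated at $\Bd_{k+1}=0$ equals $0$ by \eqref{eqn:JK}/\eqref{eqn:thm:causalCSI:eqn:2}, consistent with termination. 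Taking the minimum of the two branches yields \eqref{thm:causalCSI:eqn:1}. The Phase-2 case \eqref{eqn:thm:causalCSI:eqn:2} is the same argument with $\snrsd_k$ replaced by $\snrmax_k$ (this is where Lemma~\ref{lem:a} is used), with $\Br$ left unchanged by \eqref{eqn:amir} and no new phase reachable, so only the cap $R_k\le\Bd_k$ survives. Measurability of the selected $R_k(\mathbf{s}_k)$ and attainment of the minima are handled exactly as in the proof of Theorem~\ref{lem:2}.

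I expect the main obstacle to be the case bookkeeping around the phase boundary rather than any single hard estimate: pinning down that the threshold is exactly $\Rth_k=g^{-1}(\Br_k)$, that capping $R_k$ at $\Bd_k$ loses nothing, and that the degenerate corners ($\Rth_k\ge\Bd_k$, so the $1\to 2$ branch shrinks to the single point $R_k=\Bd_k$; and $\Bd_{k+1}=0$ being carried as $\phi_{k+1}=2$ with zero continuation cost) all match the stated formulas. Once these are settled, substituting $p_k=\mi^{-1}(R_k)/\snrsd_k$ (or $\mi^{-1}(R_k)/\snrmax_k$) back into \eqref{eqn:Jk} and relabelling gives \eqref{thm:causalCSI:eqn:0}.
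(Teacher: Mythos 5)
Your proposal is correct and follows essentially the same route as the paper's proof: the change of variable $p_k \mapsto R_k=\mi(p_k\snrsd_k)$ (or $\mi(p_k\snrmax_k)$ in Phase~2), the use of the transition rules and $J_{k+1,\termstate}=0$ to cap $R_k$ at $\Bd_k$, the identification of the threshold $\Rth_k=g^{-1}(\Br_k)$, and the Markov factorization \eqref{eqn:pdfSNR} to collapse the conditional expectation onto $\snrall_{k-m+1}^k$. If anything, your treatment of the Phase-1 branch and the boundary cases is more explicit than the paper's, which only sketches that part.
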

\begin{proof}
%
%
First, we prove \re{eqn:thm:causalCSI:eqn:2}.
Suppose the system state is $(k,2)$.  From Fig.~\ref{fig:transitions2}, the system state can transit to $(k+1,2)$ or $(k+1,\termstate)$.
We make a one-to-one transformation from the power variable $p_k$ to the rate variable $R_k=\mi(p_k\snrmax_k)$,  which represents the additional mutual information received by $\dest$ in slot $k$.
The second term in \re{eqn:Jk} expands probabilistically to two cases: the system state transits to $(k+1,2)$ if $0\leq R_k < \Bd_k$ and to $(k+1,\termstate)$ if $R_k \geq  \Bd_k$. To optimally allocate $R_k$ (or $p_k$), we minimizes over both cases.
Thus \re{eqn:Jk} becomes
$J_{k,2}(\state_k)=\min\{Q',Q''\}$
where 
\ben
\label{eqn:thm:proof:1}
Q'&=& \min_{0\leq R_k< \Bd_k}  \frac{\mi^{-1}(R_k)}{\snrmax_k}  +  \mathbb{E}_{\state_{k+1}}\left[J_{k+1,2}(\snrall_{k+1}, \Ball_{k+1}) | \state_k \right], \IEEEeqnarraynumspace\\
Q''&=& \min_{R_k\geq \Bd_k}  \frac{\mi^{-1}(R_k)}{\snrmax_k}  +  \mathbb{E}_{\state_{k+1}}\left[J_{k+1,\termstate}(\snrall_{k+1}, \Ball_{k+1}) | \state_k \right] .
\IEEEeqnarraynumspace\label{eqn:thm:proof:2}
\een
From \re{eqn:J_termstate},  $J_{k+1,\termstate}(\cdot,\cdot)=0$, thus it is optimal to let  $R_k=\Bd_k$ in the minimization problem to obtain $Q''$. 
To obtain $Q'$, we simplify $\Ball_{k+1}=\Ball_k-(0, R_k)$ by using \re{eqn:ami} for Phase~$2$. The only random variable remaining in the expectation in the minimization problem to obtain $Q'$ is $\snrall_{k+1}$, which only depends on $\snrall_{k-m+1}^k$ due to the Markovian property in \re{eqn:pdfSNR}.
By the continuity of both cases at $R_k=\Bd_k$, we obtain \re{eqn:thm:causalCSI:eqn:2}.
%

We sketch the proof for \re{thm:causalCSI:eqn:1}, which is more tedious but follows similarly as before.
Suppose the system state is $(k,1)$.
From Fig.~\ref{fig:transitions2}, the system state can transit to $(k+1,1)$, $(k+1,2)$ or  $(k+1,\termstate)$.
We make a one-to-one transformation from $p_k$ to  $R_k=\mi(p_k\snrsd_k)$.
If $R<\Rth_k=\mi\left(\mi^{-1}(\Br_k) \snrsd_k /\snrsr_k \right)$ and $R< \Bd_k$, the system state transits to  $(k+1,1)$; otherwise the system state transits to  $(k+1,2)$ or  $(k+1,\termstate)$.
Following similar steps, we get \re{thm:causalCSI:eqn:1}, \re{thm:causalCSI:Jk_1b} and \re{thm:causalCSI:Jk_1c}.
\end{proof}

Intuitively, $R_k$ represents the mutual information made available (via power allocation) to the destination in slot~$k$; $\Rth_k$ represents the mutual information threshold for $R_k$ beyond which the system transits from Phase~1 to Phase~2.
In \re{thm:causalCSI:eqn:1}, $J_{k,1\rightarrow 1}(\state)$ and $J_{k,1\rightarrow 2}(\state)$ are the optimal sum energy from slot $k$ to slot $K$, assuming the present slot is in Phase~1, and the next slot is in Phase~1 and Phase~2, respectively. Thus the minimization in \re{thm:causalCSI:eqn:1} is interpreted as a decision making between remaining in Phase~1 or transiting to Phase~2. On the other hand, in \re{eqn:thm:causalCSI:eqn:2} the slot is already in Phase~2, thus such a choice need not be made. Overall, we can interpret the dynamic programming as determining the optimal mutual information or power to allocate, and hence also the optimal slot $\widetilde{K}$ where the transition occurs.

\begin{remark}\label{rem:order}
To solve Problem~$P1$,  it suffices to use %
$\state_k =(\snr_k, \bm{\snr}_{k-m+1}^{k-1}, \Ball_k) \in \mathcal{S}_k$
as a more compact causal CSI, instead of the original causal CSI \re{eqn:state}.
This is because to compute the Bellman equation in Theorem~\ref{thm:causalCSI}, only the compact causal CSI, instead of \re{eqn:state}, is required for all $k=K,\cdots,1$.
Hence, the smaller the Markovian order $m$, the smaller the size of the causal CSI state space $\mathcal{S}_k$.
If $m$ is small, both the computational complexity and implementation complexity of the optimal policy can thus be reduced significantly.
\end{remark}

\subsection{Closed-form Solution for $K=2$ Slots}\label{sec:K=2}

To gain further insight, we consider $K=2$ slots which gives a closed-form solution.

Let $[x]_a^b$ equals $a,x$ or $b$, respectively, if $x<a, a\leq x\leq b$ or $x>b$, and let $[x]^+ = \max(0,x).$
Also, let $\Phi_1 \triangleq \mathbb{E}_{\snrsd_2}\left[ 1/ \snrsd_2 |\snrsd_{1}\right]$ and $\Phi_2 \triangleq \mathbb{E}_{\snrmax_2}\left[ 1/ \snrmax_2  |\snrmax_{1}\right]$.
Clearly, $\Phi_1\geq \Phi_2$.

\begin{corollary}\label{cor:K2}
Consider Problem~$P0$ with $K=2$ 
 and initial CSI $\state=(\snrall, \Ball)$ where $\snrall=(\snrsr_1, \snrsd_1, \snrrd_1)$, $\Ball=(\Bd,\Bd)$.
Then the minimum expected sum energy is given by
\be\label{cor:K2:0}
J_{1,1}(\state)
&=& \min \left\{ \min_{0\leq R< B' }  g_1(R), \min_{B'\leq R\leq \Bd }  g_2(R)) \right\}
\\ \label{cor:K2:1}
&=& \min \{g_1(R_1^{\star}), g_2(R_2^{\star}) \}.
\ee
%
Here, $g_i(R)\triangleq \frac{\mi^{-1}(R)}{\snrsd_1} + \Phi_{i} \mi^{-1}(\Bd-R), i=1,2,$ corresponds to the objective functions in \re{thm:causalCSI:Jk_1b}  and \re{thm:causalCSI:Jk_1c}, respectively; $R_1^{\star}\triangleq [R_1^{\circ}]^+$ and $R_2^{\star}\triangleq [R_2^{\circ}]_{B'}^{\Bd}$ denote the corresponding optimal solutions, where $B'= \min\{\Rth, \Bd\}$,  $\Rth=\mi\left(\mi^{-1}(\Bd) \snrsd_1 /\snrsr_1 \right)$ and $R^{\circ}_i = (\log(\snrsd_{1} \Phi_i) +\Bd)/2, i=1,2$.
\end{corollary}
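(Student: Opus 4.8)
The plan is to specialize Theorem~\ref{thm:causalCSI} to $K=2$ and then carry out the two one-dimensional minimizations that remain in closed form. Since transmission necessarily starts in Phase~1, the minimum expected sum energy for Problem~$P0$ (equivalently Problem~$P1$, by Lemma~\ref{lem:a}) is $J_{1,1}$. At the terminal slot $k=K=2$, \eqref{eqn:JK} gives $J_{2,1}=\mi^{-1}(\Bd_2)/\snrsd_2$ and $J_{2,2}=\mi^{-1}(\Bd_2)/\snrmax_2$, and by the Phase-1 update \eqref{eqn:amid} we have $\Bd_2=\Bd-R$, where $R$ denotes the mutual information delivered to $\dest$ in slot~$1$. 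Substituting these into \eqref{thm:causalCSI:eqn:1}--\eqref{thm:causalCSI:Jk_1c} with $k=1$, the cost-to-go terms $J'(R)$ and $J''(R)$ collapse to $\Phi_1\mi^{-1}(\Bd-R)$ and $\Phi_2\mi^{-1}(\Bd-R)$ by the definitions of $\Phi_1,\Phi_2$; also $\Br_1=\Bd$ makes the Phase-transition threshold equal to $\Rth=\mi(\mi^{-1}(\Bd)\snrsd_1/\snrsr_1)$, so the split point is $B'=\min\{\Rth,\Bd\}$. This reproduces \eqref{cor:K2:0} verbatim, with $J_{1,1\rightarrow1}=\min_{0\le R<B'}g_1(R)$ and $J_{1,1\rightarrow2}=\min_{B'\le R\le\Bd}g_2(R)$.

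Next I would solve the two scalar problems. Each $g_i(R)=(e^{R}-1)/\snrsd_1+\Phi_i(e^{\Bd-R}-1)$ is a sum of two exponentials and a constant, hence strictly convex on $\mathbb{R}$, with a unique stationary point obtained from $g_i'(R)=e^{R}/\snrsd_1-\Phi_i e^{\Bd-R}=0$, i.e.\ $R^{\circ}_i=\tfrac12\bigl(\log(\snrsd_1\Phi_i)+\Bd\bigr)$. By convexity the minimizer of $g_i$ over any interval is the projection of $R^{\circ}_i$ onto that interval. Thus $g_2$ is minimized over the compact interval $[B',\Bd]$ at $R_2^{\star}=[R^{\circ}_2]_{B'}^{\Bd}$, and $g_1$ is minimized over $[0,B')$ at the projection of $R^{\circ}_1$ onto that interval, which in the generic case ($R^{\circ}_1<B'$) equals $R_1^{\star}=[R^{\circ}_1]^{+}$.

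It then remains to stitch the two branches into \eqref{cor:K2:1}. The structural fact that makes this work is $\Phi_1\ge\Phi_2$: together with $\mi^{-1}(\Bd-R)\ge0$ for $R\le\Bd$, it gives $g_2(R)\le g_1(R)$ throughout $[0,\Bd]$, so there is a (weak) downward step from $g_1$ to $g_2$ at $R=B'$. Hence if $R^{\circ}_1<B'$ the Phase-1 branch contributes its interior value $g_1([R^{\circ}_1]^{+})$, whereas if $R^{\circ}_1\ge B'$ then $g_1$ is non-increasing on $[0,B']$, so $\inf_{0\le R<B'}g_1(R)=g_1(B')\ge g_2(B')\ge g_2(R_2^{\star})$ by continuity, and the overall minimum is already attained on the Phase-2 branch. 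In both cases $J_{1,1}=\min\{g_1(R_1^{\star}),g_2(R_2^{\star})\}$, which is \eqref{cor:K2:1}.

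The delicate step — the main obstacle — is precisely this stitching at the threshold $B'=\min\{\Rth,\Bd\}$: the Phase-1 feasible set is half-open and the objective's coefficient drops from $\Phi_1$ down to $\Phi_2$ there, reflecting the jump in the destination's effective SNR from $\snrsd$ to $\snrmax$ once the relay has decoded. One must use $\Phi_1\ge\Phi_2$ and the continuity of $\mi^{-1}(\cdot)$ to confirm that no minimizer is lost at that boundary, and that in the regime where the unconstrained minimizer $R^{\circ}_1$ leaves $[0,B')$ the compact expression $\min\{g_1(R_1^{\star}),g_2(R_2^{\star})\}$ still returns $J_{1,1}$ because the Phase-2 branch is then the active one. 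Everything else is routine one-variable convex optimization.
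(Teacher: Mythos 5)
Your proof is correct and follows essentially the same route as the paper's: specialize Theorem~\ref{thm:causalCSI} to $K=2$ to obtain \eqref{cor:K2:0}, then use the convexity of $g_1,g_2$ (projecting the unconstrained stationary points onto the feasible intervals) together with $g_2(B')\le g_1(B')$ — a consequence of $\Phi_1\ge\Phi_2$ — to dispose of the half-open boundary at $B'$ and arrive at \eqref{cor:K2:1}. The paper merely organizes the case split by the sign of $\snrsd_1-\snrsr_1$ (i.e., whether $B'=\Bd$ or $B'=\Rth$) rather than by the location of $R_1^{\circ}$ relative to $B'$, and both arguments leave the same extreme case ($R_1^{\circ}>\Bd$, where $[R_1^{\circ}]^{+}$ falls outside $[0,\Bd]$) implicit.
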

%
\begin{proof}
Applying \re{eqn:JK}, \re{eqn:J_termstate} and \re{thm:causalCSI:eqn:0} for $K=2$,  we get
%
$J_{1,1}(\state) = \min \{ J_{1,1\rightarrow 2}(\state), J_{1,1\rightarrow 1}(\state)\}
$
where $J_{1,1\rightarrow 1}(\state)= \min_{0\leq R< B' }  g_1(R), J_{1,1\rightarrow 2}(\state) = \min_{B'\leq R\leq \Bd }  g_2(R)$, which proves \re{cor:K2:0}.
Next, we prove \re{cor:K2:1}.
For convenience, a typical graph is shown in Fig.~\ref{fig:f_fun}.


Suppose $\snrsd_1 > \snrsr_1$, thus $\Rth> \Bd$ and $B'=\Bd$. Then trivially $J_{1,1\rightarrow 2}(\state)=g_2(\Bd)$. Now we consider $J_{1,1\rightarrow 1}(\state)$. Since $I^{-1}(\cdot)$ is convex, $g_1(\cdot)$ is a convex function.
The optimal rate $R^{\circ}_1$ for the {\em unconstrained} minimization of $g_1(R)$ is obtained by solving the equation formed by setting the first derivative of $g_1(R)$ to zero, which is given by $R^{\circ}_1=(\log(\snrsd_{1} \Phi_1) +\Bd)/2$.
Since a feasible rate must satisfy $0\leq R< \Bd$, and because  $g_1(R)$ is convex, it follows that the optimal solution that minimizes $g_1(R)$ is given by $R^{\star}_1=[R^{\circ}_1]^+$ or $B'-\epsilon$, where $\epsilon>0$ is infinitesimally small. Without loss of generality, however, we do not need to consider $B'-\epsilon$ as a possible candidate, since it cannot be optimal due to $g_2(B')\leq g_1(B')$.
It can be easily checked that $J_{1,1}(\state)$ is thus given by \re{cor:K2:1} for the case of $\snrsd_1 > \snrsr_1$.
Next, suppose $\snrsd_1\leq \snrsr_1$, thus $\Rth\leq \Bd$ and $B'=\Rth$. Since $g_1(R)$ and $g_2(R)$ are convex functions, it can be shown similarly that the optimal solutions for  $J_{1,1\rightarrow 1}(\state)$ and  $J_{1,1\rightarrow 2}(\state)$ are given by $[R^{\circ}_1]^+$ and $[R^{\circ}_2]_{B'}^{\Bd}$, respectively.
Thus, $J_{1,1}(\state)$ is also given by \re{cor:K2:1}  for the case of $\snrsd_1 \leq \snrsr_1$. This completes the proof.
\end{proof}

We give an intuitive explanation of Corollary~\ref{cor:K2}. Fig.~\ref{fig:f_fun} plots a typical graph of the objective functions $g_1$ and $g_2$ and indicates their respective local optimal solutions $R_1^{\circ}$ and $R_2^{\circ}$, assuming  $\snrsd_1 \leq\snrsr_1$ and so $B'=\Rth$.
In general, we can express \re{cor:K2:0} as
\be\label{eqn:alterK=2}
J_{1,1}(\state)= \min_{0\leq R\leq \Bd} g(R)
\ee
where $g(R)\triangleq g_1(R)\mathds{1}(R\leq B') + g_2(R)\mathds{1}(R\geq B')$.
The objective function $g(R)$ is {\em piecewise convex}, as both $g_1(\cdot)$ and $g_2(\cdot)$ are convex functions.
To obtain the global optimal solution, it is sufficient to check for the boundary and local optimal solutions separately in each region where the convexity of the objective function holds, and then choose the global solution as the one that minimizes $g(R)$. This check gives \re{cor:K2:1}.
For example, given the rate domain $0\leq R\leq B'$, the local optimal solution is $R^{\circ}_1$ and the boundary solutions are $0$ and $B'$. As the objective function $g_1$ is convex, the optimal solution $R_1^{\star}$ in this domain must then be $[R^{\circ}_1]^{+}$.

In general for any $k$, the minimization in Phase~1 is performed over a piecewise non-linear function, so as to decide whether the system state $(k,1)$ should transit to other states.
For $K=2$ slots, this decision can only be made in slot $1$, and moreover 
the objective function 
is a piecewise convex function, thus allowing a closed-form solution to be obtained.
For $K\geq 3$ slots, however, the decision can be made at any slot $k=1,\cdots, K-1$ and the objective function may be a piecewise non-convex function, thus a closed-form solution is difficult to obtain. 
Therefore for $K\geq 3$, we resort to numerical methods in Section~\ref{sec:numerical} based on Theorem~\ref{thm:causalCSI}.
We note that the problem with no relay is already hard to solve, see \cite{LeeJindal_arXiv,LeeJindal_TransWCOM}.

Despite the difficulty in obtaining closed-form solutions in general, we answer the important question of whether the optimal solution leads to bounded expected sum energy in the next section for any $K$.


\begin{figure}
\centering
\includegraphics[scale=0.85]{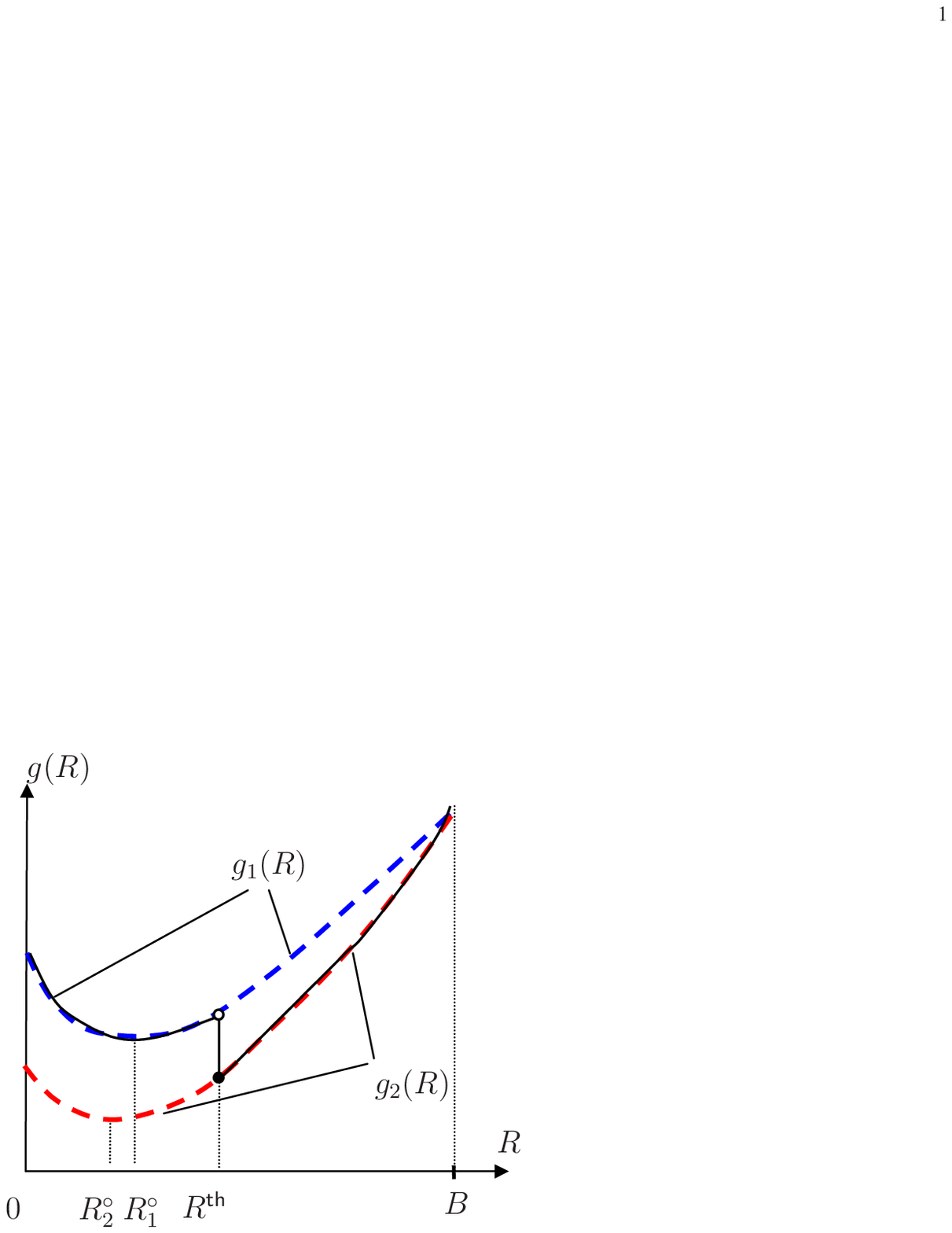}
\label{fig:piecewise_conv}
\caption{Typical graph of objective function $g(R)$. Here, $g(R)=g_1(R)\mathds{1}(R<\Rth) + g_2(R)\mathds{1}(R\geq\Rth)$ with the mutual information threshold $\Rth$, while $R_i^{\circ}$ represents the local optimal $R$ for the objective function $g_i(R)$. }
\label{fig:f_fun}
\end{figure}

%

\section{Conditions for Bounded Expected Sum Energy}\label{sec:boundedpower}


In this section, we consider conditions for the minimum expected sum energy to be bounded for a general channel distribution. In particular, we show that the expected energy is unbounded in Rayleigh and Rician fading channels if relaying is not used.
For simplicity, we assume the link SNRs $\snrsr_k, \snrsd_k, \snrrd_k$ are mutually independent and also independent over slot $k$, but follow the same class of channel distributions.

For exposure, we first assume the link SNRs follows the truncated exponential PDF. Then we consider the general case.

\subsection{Truncated Exponential Distribution}
Consider the {\em truncated exponential PDF} given by
\be\label{eqn:truncexp}
f_{\snr}(\snr; \avesnr)=c\exp(-\snr/\avesnr) 
\ee
for $\snr\geq \snrtr$ and $f_{\snr}(\snr; \avesnr)=0$ otherwise.
The parameter $\snrtr\geq 0$ is a SNR truncation threshold, and 
$c=\exp(\snrtr/\avesnr)/\avesnr$ is a normalization constant.
As $\snrtr\rightarrow 0$, the truncated exponential PDF approaches the exponential PDF with average SNR
$\avesnr$, i.e., the channel amplitude follows a Rayleigh PDF.
For convenience, we shall refer to $\avesnr$ as the average SNR in general, even though it equals the expectation of $\snr$ only if $\snrtr\rightarrow 0$.

We assume all link SNRs follow the  truncated exponential PDF with the same SNR truncation threshold $\snrtr$ but with possibly different corresponding average SNR's $\avesnrsr, \avesnrsd, \avesnrrd$.
These distributions are considered in \cite{LeeJindal_TransWCOM} where no relay is present, with the restriction that $\snrtr$ is strictly positive.
However, in the literature it is typical to assume a Rayleigh-fading channel for wireless communications, which corresponds to $\snrtr=0$.
As such, we also focus on the case where $\snrtr\rightarrow 0$.


For comparing the performance for different $K$, we define the normalized minimum expected sum energy (NMESE) $\bar{J}_K(\Reff)$ as the expected sum energy per slot for transmitting $\Reff$ nats per slot. That is,
\be \label{eqn:expectedJ_K}
\bar{J}_K(\Reff)=\frac{1}{K}\mathbb{E}_{\snrall}\left[ J_{1,1}(\state_1 | K \text{ slots}) \right]
\ee
where $J_{1,1}(\state_1 | K \text{ slots})$ emphasizes that the minimum expected sum energy $J_{1,1}(\state_1)$ is for a $K-$slot system, and the initial system state is $\state_1=(\snrall_1, \Ball_1)$ where $\Ball_1=(K\Reff, K\Reff)$.
For the case of no relaying, we denote its NMESE as $\bar{J}^{\text{no relay}}_K(\Reff)$.
Since relaying requires $K\geq 2$, clearly $\bar{J}_1(\Reff)=\bar{J}^{\text{no relay}}_1(\Reff)$.

For $K=1,2$, we obtain the following respective closed-form solutions (see Appendix~\ref{append:K2trunc})
\be\label{cor:K2trunc:1}
\bar{J}_1(\Reff)
&=& \mathbb{E}\left[\frac{I^{-1}(\Reff)}{\snrsd_1} \right]= \Phi_1 I^{-1}(\Reff) \\
\bar{J}_2(\Reff)
&=& \frac{1}{2} \mathbb{E}\left [ \min \left\{ \frac{\mi^{-1}(R_1^{\star})}{\snrsd_1} +  \Phi_{1} \mi^{-1}(2\Reff-R_1^{\star}),
\right .\right . \nonumber \\
\label{cor:K2trunc:2}
&& \hspace{1.6cm} \left.\left.
\frac{\mi^{-1}(R_2^{\star})}{\snrsd_1} + \Phi_{2} \mi^{-1}(2\Reff-R_2^{\star})\right \} \right ]
\;\;\;\;\;\;
\ee
where $\Phi_1, \Phi_2, R_1^{\star}$ and $R_2^{\star}$ are defined in Section~\ref{sec:K=2}.
Here and henceforth, we assume the expectation is taken over all random variables and drop the subscript of the expectation operator.
We simplify $\Phi_1, \Phi_2$ for our specific channel distribution to give (see Appendix~\ref{append:K2trunc})
\be\label{eqn:1div}
\Phi_1 &=& \mathbb{E} \left[ 1/ \snrsd_1\right]= \ksd \Eone{\snrtr/\avesnrsd} \\ \nonumber
\Phi_2 &=& \mathbb{E} \left[ 1/ \snrmax_1  \right] =  \ksd\Eone{\frac{\snrtr}{\avesnrsd}} + \krd \Eone{\frac{\snrtr}{\avesnrrd}} \\
&& \hspace{1.6cm} -  \hat{k} \Eone{\frac{\snrtr}{\snrhm}}.
\label{eqn:2div}
\ee
Here, $\Eone{x}=\int_x^{\infty}\exp(-t)/t \,\mathrm{d}t$ is the exponential integral, $\snrhm\triangleq {\sf HM}(\avesnrsd_1, \avesnrrd_1)$ is the harmonic mean of $\avesnrsd_1$ and $\avesnrrd_1$, and $\hat{k}\triangleq \exp\left(\snrtr /\snrhm \right)/\snrhm$.
Also, $\ksd=\exp(\snrtr/\avesnrsd)/\avesnrsd, \krd=\exp(\snrtr/\avesnrrd)/\avesnrrd$ are the $c$'s in  \re{eqn:truncexp} corresponding to the respective links.

Consider $\snrtr\rightarrow 0$, i.e., all links approach Rayleigh-fading channels.
Theorem~\ref{thm:boundedenergy} states that the minimum expected sum energy (or the NMESE) is unbounded if there is no relaying for any $K\geq 1$, which is an extremely pessimistic result, but becomes bounded  if relaying is used. 
Note that relaying requires $K\geq2$.




\begin{theorem}\label{thm:boundedenergy}
Suppose that the SNRs for different links and different slots are independent and each link follows the truncated exponential PDF. Then the following holds for the NMESE:
\begin{itemize}
\item[(i)]Without relaying, $\bar{J}^{\text{no relay}}_K(\Reff)\rightarrow\infty$ as $\snrtr\rightarrow 0$ for $K\geq 1$.  
\item[(ii)] With relaying, $\bar{J}_K(\Reff)$ is bounded as $\snrtr\rightarrow 0$ for $K\geq 2$.
\end{itemize}
\end{theorem}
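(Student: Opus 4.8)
\emph{Overall plan.} I would prove (i) by induction on the horizon $K$ using Bellman's recursion of Theorem~\ref{lem:2}, and (ii) by exhibiting one feasible (suboptimal) relaying policy whose expected sum energy stays bounded as $\snrtr\to0$. For (i), recall by Remark~\ref{rem:markov3} that the no-relay problem is the case $\snrsr_k\equiv 0$, for which Phase~2 never occurs and Theorem~\ref{lem:2} collapses (after the change of variables $p_k\mapsto R_k=\mi(p_k\snrsd_k)$) to the point-to-point recursion
\[
V_1(\snr,b)=\frac{\mi^{-1}(b)}{\snr},\qquad
V_K(\snr,b)=\min_{0\le R\le b}\Big[\frac{\mi^{-1}(R)}{\snr}+\mathbb{E}_{\snr'}\big[V_{K-1}\big(\snr',[b-R]^+\big)\big]\Big],
\]
with the convention $V_{K-1}(\snr,0)=0$; here $V_K(\snr,b)$ is the minimum expected sum energy of a $K$-slot no-relay system delivering $b$ nats with initial SNR $\snrsd_1=\snr$, and $\snr,\snr'$ are i.i.d.\ truncated exponential with parameters $\snrtr,\avesnrsd$. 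Since $\bar J^{\text{no relay}}_K(\Reff)=\tfrac1K\mathbb{E}_{\snr}[V_K(\snr,K\Reff)]$, it suffices to show $\mathbb{E}_{\snr}[V_K(\snr,b)]\to\infty$ as $\snrtr\to0$ for every fixed $b>0$ and $K\ge1$. The base case $K=1$ is immediate: $\mathbb{E}_{\snr}[V_1(\snr,b)]=\mi^{-1}(b)\,\ksd\,\Eone{\snrtr/\avesnrsd}\to\infty$, because $\Eone{x}\to\infty$ as $x\to0^+$ while $\ksd\to1/\avesnrsd$.

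\emph{Inductive step for (i).} Fix $\delta\in(0,b)$. Since $V_{K-1}(\snr,\cdot)$ is nondecreasing (any policy delivering $b'$ nats also delivers any smaller amount), the inductive hypothesis applied at message size $\delta$ gives $M\triangleq\mathbb{E}_{\snr'}[V_{K-1}(\snr',\delta)]\to\infty$ and $\mathbb{E}_{\snr'}[V_{K-1}(\snr',b')]\ge M$ for all $b'\ge\delta$. Splitting the minimization in the recursion according to whether $R\le b-\delta$ (then the residual $b-R\ge\delta$, so the continuation cost is $\ge M$) or $R>b-\delta$ (then $\mi^{-1}(R)/\snr\ge\mi^{-1}(b-\delta)/\snr$) yields $V_K(\snr,b)\ge\min\!\big(M,\,c/\snr\big)$ with $c\triangleq\mi^{-1}(b-\delta)$. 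Taking expectations, for $\snrtr$ small (so that $\max(\snrtr,c/M)<1$),
\[
\mathbb{E}_{\snr}[\min(M,c/\snr)]\ \ge\ c\,\ksd\!\!\int_{\max(\snrtr,\,c/M)}^{1}\!\!\gamma^{-1}e^{-\gamma/\avesnrsd}\,d\gamma\ \ge\ c\,\ksd\,e^{-1/\avesnrsd}\,\log\frac{1}{\max(\snrtr,\,c/M)}\ \longrightarrow\ \infty,
\]
since both $\snrtr\to0$ and $c/M\to0$. Hence $\mathbb{E}_{\snr}[V_K(\snr,b)]\to\infty$, which closes the induction and proves (i).

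\emph{Part (ii).} It suffices to exhibit one feasible policy $\hat\pi$ whose expected sum energy is bounded uniformly in $\snrtr\ge0$, since $J_{1,1}(\mathbf{s}_1)$ can only be smaller. Put $B=K\Reff$ and use only the first two slots. In slot~$1$, $\source$ transmits with power $\hat p_1=\mi^{-1}(B)/\max(\snrsr_1,\snrsd_1)$, so whichever of $\{\relay,\dest\}$ has the larger source-link SNR accumulates exactly $B$ nats and decodes. If $\snrsd_1>\snrsr_1$, then $\dest$ is done ($\widetilde{K}\ge K$) and slots $2,\dots,K$ are silent. If $\snrsr_1\ge\snrsd_1$, then $\relay$ has decoded ($\widetilde{K}=1$), $\dest$ holds $\mi(\hat p_1\snrsd_1)\le B$ nats, and in slot~$2$ (Phase~2) the stronger of $\{\source,\relay\}$ transmits (cf.\ Lemma~\ref{lem:a}) with power $\hat p_2=\mi^{-1}(\Bd_2)/\snrmax_2$, where $\Bd_2=B-\mi(\hat p_1\snrsd_1)\in[0,B]$ and $\snrmax_2=\max(\snrsd_2,\snrrd_2)$, completing the delivery; slots $3,\dots,K$ are silent. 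This $\hat\pi$ is feasible, and its sum energy satisfies $\hat p_1+\hat p_2\le\mi^{-1}(B)\big(1/\max(\snrsr_1,\snrsd_1)+1/\max(\snrsd_2,\snrrd_2)\big)$ by $\Bd_2\le B$ and independence of the slot-$2$ SNRs from slot~$1$. The crucial fact is that $\mathbb{E}[1/\max(X,Y)]<\infty$ for independent (truncated) exponentials $X,Y$: the c.d.f.\ of $\max(X,Y)$ vanishes quadratically at the origin, so its density vanishes linearly there and absorbs the $1/x$ singularity. Moreover the maximum of two truncated exponentials with threshold $\snrtr$ equals $\snrtr$ plus the maximum of two ordinary exponentials, hence $1/\max(\cdot,\cdot)\le 1/\max(\cdot,\cdot)\big|_{\snrtr=0}$ pointwise, so the mean is bounded by its finite value at $\snrtr=0$. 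Therefore $\mathbb{E}_{\hat\pi}[\hat p_1+\hat p_2]$ is bounded uniformly in $\snrtr$, and $\bar J_K(\Reff)\le\tfrac1K\mathbb{E}_{\hat\pi}[\hat p_1+\hat p_2]<\infty$, which proves (ii).

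\emph{Main obstacle.} Part~(i) is the delicate half. The divergence is \emph{not} caused by any single ``all channels simultaneously bad'' event --- for $K\ge2$ each such event contributes only a finite amount to the expected energy --- but by the structure of the optimal policy: because using a later slot over a Rayleigh link has infinite expected cost as $\snrtr\to0$ (the term $\mathbb{E}[1/\snr]\to\infty$), the optimum is forced to deliver nearly the entire residual message in whatever slot is current, at cost of order $\mi^{-1}(b)/\snr$. Capturing this cleanly requires the inductive use of Bellman's recursion together with the monotonicity of $V_K$ in the message size and the elementary integral estimate for $\mathbb{E}_{\snr}[\min(M,c/\snr)]$ above; by contrast, the construction and bounding of $\hat\pi$ in part~(ii) is comparatively routine.
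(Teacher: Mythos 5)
Your proof is correct. Part (ii) is essentially the paper's own argument: the same two-slot heuristic (the source transmits so that the stronger of $\{\relay,\dest\}$ decodes in slot~1, and the stronger of $\{\source,\relay\}$ finishes the delivery in slot~2), with boundedness reduced to $\mathbb{E}[1/\max(\snrsd,\snrrd)]<\infty$; the paper establishes this by computing $\Phi_2$ explicitly via exponential integrals (Lemma~\ref{lem:asym}), whereas your monotone-coupling observation (a truncated exponential is an ordinary exponential shifted by $\snrtr$, so $1/\max$ is pointwise dominated by its value at $\snrtr=0$) reaches the same conclusion without any asymptotic expansion. Part (i) is where you genuinely diverge. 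The paper argues by contradiction on the state diagram: if the NMESE were bounded, the transition $(K,1)\to(K+1,\termstate)$ would have to occur with probability zero, which collapses the $K$-slot diagram to a $(K-1)$-slot one, and so on down to the one-slot case. You instead run a forward induction on the horizon with the quantitative pointwise bound $V_K(\snr,b)\ge\min(M,c/\snr)$, obtained by splitting the Bellman minimization at $R=b-\delta$, and then integrate. Your version is tighter on a point the paper glosses over: the paper's claim that a positive-probability termination in slot $K$ contributes $\Phi_1\,\mathbb{E}[\mi^{-1}(\Bd_K)]\to\infty$ implicitly assumes that this probability and the residual $\Bd_K$ do not themselves vanish as $\snrtr\to0$ under the ($\snrtr$-dependent) optimal policy; your $\min(M,c/\snr)$ bound handles the trade-off between ``finish now at cost $c/\snr$'' and ``defer at cost at least $M$'' uniformly over all policies, so the divergence of $\mathbb{E}[\min(M,c/\snr)]$ settles the matter cleanly. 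The only cosmetic difference in part (ii) is that you deliver the full $B=K\Reff$ nats in the first two slots, while the paper spreads a bounded-energy filler over the first $K-2$ slots and applies the heuristic to the last two; both are valid.
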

\begin{proof}
We prove the two parts separately. We will need  Lemma~\ref{lem:asym} in Appendix~\ref{append:lem:asym} for part (i).

(i)
For $K=1$, the NMESE with and without relaying is the same. From \re{cor:K2trunc:1}, $\bar{J}^{\text{no relay}}_1(\state)=\bar{J}_1(\Reff)=\Phi_1 I^{-1}(\Reff)$, which approaches infinity because from Lemma~\ref{lem:asym}, $\Phi_1\rightarrow \infty$ as $\snrtr\rightarrow 0$.
Subsequently, we assume $K\geq 2$.
From Remark~\ref{rem:markov3}, the state diagram for the case without relaying can be obtained from Fig.~\ref{fig:transitions2}.
Suppose that the NMESE $\bar{J}^{\text{no relay}}_K(\Reff)$ is {\em bounded} as $\snrtr\rightarrow 0$. 
Then in the optimal policy, we claim that the transition from system state $(K,1)$ to $(K+1,\termstate)$ occurs with zero probability.  Otherwise, if the transition occurs with strictly positive probability, the contribution of the energy to transit to the system state $(K+1,\termstate)$ is $\mi^{-1}(\Bd_{K})/\snrsd_k$. Since the SNRs are independent over slots, the expected energy incurred is $\Phi_1 \mathbb{E}[\mi^{-1}(\Bd_{K})]$, and it goes to infinity as $\snrtr\rightarrow 0$. Thus, the transition to system state $(K+1,\termstate)$ cannot occur.
This implies that the transition from $(K-1,1)$ to $(K,1)$ cannot occur as well (see Fig.~\ref{fig:transitions2} assuming there is no system state in Phase~2).
The state diagram for this $K$-slot system thus reduces to the  $(K-1)$-slot system. By repeating the same arguments that the state transition from $(k,1)$ to $(k+1,\termstate)$ cannot occur for decreasing $k=K-1,\cdots,1$, we are left with the state diagram of a one-slot system. But we have shown that $\bar{J}^{\text{no relay}}_1(\state)\rightarrow \infty$ as $\snrtr\rightarrow 0$.
By contradiction, $\bar{J}^{\text{no relay}}_K(\Reff)$ is {\em unbounded} as $\snrtr\rightarrow 0$.

(ii) It is sufficient to show that a power allocation policy that achieves a bounded expected sum  energy exists.
%
First, consider $K=2$.
We use the following {\em heuristic power allocation policy}. $\source$ transmits in the first slot with power $p_1=\frac{\mi^{-1}(\Reff)}{ \max\{\snrsd_1, \snrsr_1\}}$.
If $\snrsd_1\geq \snrsr_1$, then $\dest$ can decode the message and so the transmission terminates.
If $\snrsd_1< \snrsr_1$, then $\relay$ can decode the message while $\dest$ accumulates mutual information of amount  $\Rth=\mi\left(\mi^{-1}(\Reff) \snrsd_1 /\snrsr_1 \right)$. In the second slot, the stronger node of $\source$ or $\relay$ transmits with power $p_2=\frac{\mi^{-1}(\Reff-\Rth)}{\max\{\snrsd_2, \snrrd_2\}}$, such that $\dest$ reliably decodes the message.
The expected sum energy of this policy is then given by
\be \nonumber 
&& \mathbb{E}\left[p_1 \right]+
\mathbb{E}\left[p_2  \big| \snrsd_1< \snrsr_1\right] \\
&\leq& \mathbb{E}\left[\frac{\mi^{-1}(\Reff)}{ \max\{\snrsd_1, \snrsr_1\}}\right]+
\mathbb{E}\left[\frac{\mi^{-1}(\Reff) }{\max\{\snrsd_2, \snrrd_2\}} \Big| \snrsd_1< \snrsr_1 \right] \nonumber \\
\nonumber 
\\
\nonumber 
&=&\mi^{-1}(\Reff) \left( \mathbb{E}\left[\frac{1}{\max\{\snrsd_1, \snrrd_1\}}\right] +  \mathbb{E}\left[\frac{1}{\max\{\snrsd_2, \snrrd_2\}}\right]  \right)
\\
\label{eqn:proof:thm:boundedenergy:4}
&=&2\mi^{-1}(\Reff) \Phi_2.
\ee
Here, the inequality is because $\Rth\geq 0$ and $I^{-1}(\cdot)$ is an increasing function,
the first equality is due to the independence of the SNR, and the second equality is because the average SNRs are the same over slots.
By Lemma~\ref{lem:asym} in Appendix~\ref{append:lem:asym},  $\Phi_2$ is bounded as $\snrtr\rightarrow 0$.
As the optimal policy achieves at most the same energy as \re{eqn:proof:thm:boundedenergy:4}, $J_{1,1}(\state) $ and its expectation $\bar{J}_2(\Reff)$ are also bounded.

Next, consider $K\geq 3$. For the first $(K-2)$ slots, we can use any power allocation policy that gives a bounded expected sum energy, e.g., $\source$ transmits at some fixed power. For the last two slots, we employ the above heuristic power allocation policy for $K=2$ to deliver the remaining mutual information to $\dest$. This policy, and hence the optimal policy, achieves a bounded expected sum energy for $K\geq 3$.
\end{proof}

Theorem~\ref{thm:boundedenergy} gives the fundamental advantage of using relaying compared to no relaying, from the perspective of minimizing energy.
If no relaying is used, we must therefore accept some probability that the destination cannot be served in Rayleigh-fading channels in practice.
If relaying is used, however, the destination can always be served.

Consider the case of no relaying.
Intuitively, if $K=1$, the expected sum energy is expected to be unbounded due to the lack of channel diversity.
If $K\geq 2$, Theorem~\ref{thm:boundedenergy} states that the NMESE is unbounded, even with an optimal policy.
This may be surprising because under the assumption that {\em full CSI} is available, the NMESE is in fact bounded%
\footnote{We give a sketch of the proof. We use the suboptimal power allocation policy that allocates all the energy to the slot with the largest SNR to deliver all bits to the destination. It can be shown, say by the use of ordered statistics, that the expected energy is bounded. Hence, the optimal policy must also achieve a bounded expected energy.}.
Hence, it is not obvious if the NMESE is bounded given only {\em causal CSI}.
Given only casual CSI, the channels in the future are not known in advance, and based on Theorem~\ref{thm:boundedenergy}, it turns out that it is difficult to effectively exploit the time diversity of the channels. In fact, given simply a one-slot-look-ahead CSI, i.e., assuming the CSI of the present and the next slot is known, is sufficient for the NMESE to be bounded\footnote{To show this, we use the (suboptimal) power allocation scheme that allocates no energy for the first $(K-2)$ slots. Then we deliver all information in the last two slots. In the last two slots, full CSI is available, so we can achieve bounded NMESE.}.
This reinforces the fact that causality plays an important role in the boundedness of the NMESE.

On the other hand, relaying can exploit spatial diversity over different nodes to give a bounded expected energy. This is because the SNRs of {\em different} links are available in the present slot, which can be used effectively.
For example, consider the heuristic power allocation policy proposed in the proof of Theorem~\ref{thm:boundedenergy}, which gives provably bounded expected sum energy.
This power allocation policy exploits spatial diversity in {\em both slots} to give $p_1=\frac{\mi^{-1}(\Reff)}{ \max\{\snrsd_1, \snrsr_1\}}$ and $p_2=\frac{\mi^{-1}(\Reff-\Rth)}{\max\{\snrsd_2, \snrrd_2\}}$.
In contrast, consider the following naive power allocation scheme with relaying over $K=2$ slots: $\source$ transmits such that $\relay$ decodes the message in the first slot, then $\relay$ transmits in the second slot.
Although the advantage of cooperative relaying (via spatial diversity) can be exploited in the second slot, the expected energy in the first slot is unbounded as $\snrtr\rightarrow 0$ (corresponding to a one-slot system).
Thus, exploiting the spatial diversity over {\em only one slot} appears insufficient in a relaying system.

So far, we restrict the study on the boundedness of the NMESE for the Rayleigh distribution, by using the truncated exponential distribution and letting $\snrtr\rightarrow 0$.
In the next section, we generalize the proof technique used in this section to arbitrary channel distributions.

\subsection{Arbitrary Channel Distributions}

We consider an arbitrary channel distribution with the same assumption that channels over different links and slots are independent. Theorem~\ref{thm:genbound} gives conditions to determine if the NMESE is bounded.
Without loss in generality, we assume that all average SNRs are the same, i.e., the channels are i.i.d. over links and slots.

\begin{theorem}\label{thm:genbound}
Suppose that the SNRs for different links and slots are i.i.d. with 
cumulative density function (CDF) $F(\snr), \snr\geq0$.
Then the following holds for the NMESE:
\bi
\item[(i)] Without relaying: $\bar{J}^{\text{no relay}}_K(\Reff)$ is bounded if and only if  $\Phi_1=\mathbb{E}[1/\snr]$ is bounded.
\item[(ii)] With relaying: $\bar{J}_K(\Reff)$ is bounded if $\Phi_2=\mathbb{E}\left[ {1/\max(\snrsd, \snrrd)} \right]$ is bounded.
\ei
\end{theorem}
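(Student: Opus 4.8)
My plan is to mirror the two-pronged argument in the proof of Theorem~\ref{thm:boundedenergy}, replacing the limiting regime ``$\snrtr\to 0$'' by the abstract hypotheses ``$\Phi_1=\mathbb{E}[1/\snr]=\infty$'' (part~(i)) and ``$\Phi_2=\mathbb{E}[1/\max(\snrsd,\snrrd)]<\infty$'' (part~(ii)). In each case the boundedness claims are proved by exhibiting an explicit feasible (generally suboptimal) policy and bounding its expected sum energy, while the unboundedness claim, which appears only in part~(i), is a recursive contradiction that leans on the causality of the CSI and on the independence of the SNRs across slots.

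For part~(i), sufficiency is immediate: if $\Phi_1<\infty$, let $\source$ deliver all $B=K\Reff$ nats in the first slot, $p_1=\mi^{-1}(B)/\snrsd_1$, and transmit nothing afterwards; this requires no relay, is feasible, and has expected sum energy $\Phi_1\,\mi^{-1}(B)<\infty$, so $\bar{J}^{\text{no relay}}_K(\Reff)$ is finite. For necessity I would show, by induction on the horizon $j\ge1$, that when $\Phi_1=\infty$ the minimum expected sum energy of any $j$-slot system without relaying (budget $B$) is infinite. The base case $j=1$ is the analogue of \re{cor:K2trunc:1}, namely $\Phi_1\,\mi^{-1}(B)=\infty$. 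For the step: in the last slot $j$ the destination must complete decoding, so $p_j\ge\mathds{1}(\Bd_j>0)\,\mi^{-1}(\Bd_j)/\snrsd_j$; since the CSI is causal, $\Bd_j$ depends only on the SNRs and powers of slots $1,\dots,j-1$ and is therefore independent of $\snrsd_j$, whence $\mathbb{E}[p_j]\ge\Phi_1\,\mathbb{E}[\mathds{1}(\Bd_j>0)\,\mi^{-1}(\Bd_j)]$. If $\mathbb{P}(\Bd_j>0)>0$ this is $\infty$; otherwise $\Bd_j\le0$ almost surely, which by \re{eqn:amid} means the first $j-1$ slots already satisfy the destination's decoding constraint for a $(j-1)$-slot system, so the restricted policy has expected energy at least the $(j-1)$-slot minimum, which is $\infty$ by the induction hypothesis. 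Taking $j=K$ gives the result; Remark~\ref{rem:markov3} and Fig.~\ref{fig:transitions2} let me regard the no-relay system as the state diagram with the Phase-$2$ states deleted, so no separate bookkeeping is needed.

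For part~(ii) I only need to produce one relaying policy with bounded expected sum energy when $\Phi_2<\infty$; I would reuse the heuristic from the proof of Theorem~\ref{thm:boundedenergy}. For $K=2$: $\source$ transmits $p_1=\mi^{-1}(B)/\max\{\snrsd_1,\snrsr_1\}$ in slot~1; if $\snrsd_1\ge\snrsr_1$ the destination decodes and we stop, otherwise $\relay$ decodes, the destination has accumulated $\Rth=\mi(\mi^{-1}(B)\snrsd_1/\snrsr_1)$, and in slot~2 the stronger of $\source,\relay$ transmits $p_2=\mi^{-1}(B-\Rth)/\max\{\snrsd_2,\snrrd_2\}$, finishing the message. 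Because $\mi^{-1}$ is increasing and $\Rth\ge0$, the expected sum energy is at most $\mi^{-1}(B)\big(\mathbb{E}[1/\max\{\snrsd_1,\snrsr_1\}]+\mathbb{E}[1/\max\{\snrsd_2,\snrrd_2\}]\big)$, and since the links are i.i.d.\ both reciprocal-of-maximum expectations equal $\Phi_2$, giving the bound $2\,\mi^{-1}(B)\,\Phi_2<\infty$ (cf.\ \re{eqn:proof:thm:boundedenergy:4}). For $K\ge3$ I allocate zero power in the first $K-2$ slots and run this two-slot heuristic on the last two slots to deliver all $B$ nats; the same bound $2\,\mi^{-1}(B)\,\Phi_2$ holds, and as the optimal policy does no worse, $\bar{J}_K(\Reff)$ is bounded.

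The one genuinely delicate step is the necessity direction of part~(i): one must argue that, unless the expected energy already diverges, causality forces the destination to complete decoding by slot $j-1$ almost surely, thereby collapsing the $j$-slot problem to the $(j-1)$-slot problem and, eventually, to the one-slot problem where the bound is plainly infinite --- the independence of $\Bd_j$ from $\snrsd_j$ is what makes this rigorous. Everything else ($\mi^{-1}$ increasing, conditioning on independent SNRs, the i.i.d.\ symmetry $\mathbb{E}[1/\max\{\snrsd,\snrsr\}]=\mathbb{E}[1/\max\{\snrsd,\snrrd\}]=\Phi_2$) is routine, and note that part~(ii) only asserts an ``if'', so no converse is required there.
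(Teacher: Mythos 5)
Your proposal is correct and follows essentially the same route as the paper: sufficiency in (i) via one-shot delivery costing $\Phi_1\,\mi^{-1}(\cdot)$, necessity via the recursive collapse of the $K$-slot no-relay problem to the one-slot problem (using independence of $\Bd_j$ from $\snrsd_j$ to force either infinite expected energy or almost-sure early termination), and (ii) via the two-slot heuristic bounded by $2\,\mi^{-1}(\cdot)\,\Phi_2$. Your explicit induction merely fills in details the paper delegates to ``following part (i) of the proof of Theorem~\ref{thm:boundedenergy}''.
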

\begin{proof}
(i) ``Only if" part: 
Following part (i) of the proof of Theorem~\ref{thm:boundedenergy}, it can be easily verified that if $\Phi_1$ is unbounded in general for any CDF, then $\bar{J}^{\text{no relay}}_K(\Reff)$ is unbounded. \\
``If" part:
We allocate power such that $\dest$ reliably decodes the message in the first slot. This incurs an expected energy of $\Phi_1 I^{-1}(\Reff)$, which is bounded if $\Phi_1$ is bounded. Thus $\bar{J}^{\text{no relay}}_K(\Reff)$ is also bounded. 

(ii) 
Suppose we use the heuristic policy in part (ii) of the proof of Theorem~\ref{thm:boundedenergy}. From \eqref{eqn:proof:thm:boundedenergy:4}, the expected sum energy is upper bounded by
$2 \mi^{-1}(\Reff) \Phi_2$. If $\Phi_2$ is bounded, 
$\bar{J}_K(\Reff)$ must also be bounded.
\end{proof}

Theorem~\ref{thm:genbound} clarifies that $\Phi_1$ and $\Phi_2$ that depend only on the channel distribution are the key parameters for checking if the NMESE is bounded.
%
%
%
%
%
%
As an application, we assume that each link SNR $\snr$ follows the noncentral chi-squared PDF with degree of freedom $v>0$ and non-centrality parameter $\lambda>0$.
Without loss of generality for determining whether the NMESE is bounded, let the mean be $\mathbb{E}[\snr]=v$.
Thus, the CDF of $\snr$ is \cite{simon04} 
\be\label{eqn:chi2}
F_{\chi^2}(\snr; v, \lambda)=1- Q_{v/2}\left(\sqrt{\lambda},\sqrt{\snr}\right), \; \snr\geq 0
\ee
where $Q_{M}(a,b)=\int_{b}^{\infty} x (x/a)^{M-1} e^{-(x^2+a^2)/2} I_{M-1}(ax) \,\mathrm{d} x$ is the generalized Marcum Q-function and $I_{M-1}(\cdot)$ is the modified Bessel function of order $M-1$.
Note that we can obtain the noncentral chi-squared distribution via $\snr=\sum_{i=1}^{s}|X_i|^2$ where $X_1,\cdots,X_s$ are complex independent Gaussian variables (not necessarily of zero mean or of the same variance), where $s=v/2$ is a positive integer \cite{simon04}.
We can view this as the SNR distribution after maximal-ratio combining (MRC) of the received signals from $s$ independent Rayleigh or Rician fading channels \cite{simon04}.
This distribution reduces to the Rayleigh fading channel if $v=2$ and $\lambda\rightarrow 0$, and to the Rician fading channel if $v=2$ and $\lambda>0$.


\begin{corollary}\label{thm:boundedenergy_gen}
Suppose that the SNRs for different links and different slots are i.i.d. and each follows the CDF $F_{\chi^2}(\snr; v, \lambda), \snr\geq 0$. Then the following hold for the NMESE: 
\begin{itemize}
\item[(i)] Rayleigh fading channel: $\bar{J}^{\text{no relay}}_K(\Reff)\rightarrow \infty$ for $K\geq1$ and $\bar{J}_K(\Reff)< \infty$ for $K\geq2$.
\item[(ii)] Rician fading channel: $\bar{J}^{\text{no relay}}_K(\Reff)\rightarrow \infty$ for $K\geq1$ and $\bar{J}_K(\Reff)< \infty$ for $K\geq2$.
\item[(iii)] Noncentral chi-squared channel where $v=2s$ and $s\in\{2,3,\cdots\}$: $\bar{J}^{\text{no relay}}_K(\Reff)< \infty$ for $K\geq1$ and $\bar{J}_K(\Reff)< \infty$ for $K\geq2$.
\end{itemize}
\end{corollary}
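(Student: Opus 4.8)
The plan is to invoke Theorem~\ref{thm:genbound}, which has already reduced the question of boundedness of the NMESE to the boundedness of the two scalars $\Phi_1=\mathbb{E}[1/\snr]$ and $\Phi_2=\mathbb{E}[1/\max(\snrsd,\snrrd)]$, both of which depend only on the common channel CDF $F_{\chi^2}(\cdot;v,\lambda)$. Thus the entire corollary follows once we decide, for each of the three parameter regimes, whether these two expectations are finite. For the case $K=1$ without relaying no appeal to Theorem~\ref{thm:genbound} is even needed: the closed form $\bar J_1(\Reff)=\Phi_1\mi^{-1}(\Reff)$ in \re{cor:K2trunc:1} shows directly that $\bar J^{\text{no relay}}_1(\Reff)=\infty$ exactly when $\Phi_1=\infty$, and otherwise it is finite.

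First I would record the behaviour of the noncentral chi-squared density near the origin and at infinity. Differentiating \re{eqn:chi2}, or equivalently using the series $I_\nu(z)=\sum_{j\ge 0}(z/2)^{\nu+2j}/(j!\,\Gamma(\nu+j+1))$ for the modified Bessel function, gives $f_{\chi^2}(\snr;v,\lambda)\sim c_0\,\snr^{v/2-1}$ as $\snr\to 0^+$ for a constant $c_0=c_0(v,\lambda)>0$ (with the leading exponential factor $e^{-(\snr+\lambda)/2}\to e^{-\lambda/2}>0$), and hence $F_{\chi^2}(\snr;v,\lambda)\sim(2c_0/v)\,\snr^{v/2}$. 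At the other end $f_{\chi^2}(\snr;v,\lambda)$ decays like $e^{-\snr/2}$ up to polynomial and Bessel factors, so $1/\snr$ and $1/\max(\snrsd,\snrrd)$ cause no integrability trouble in the tail; everything is decided at $\snr=0$. For the Rayleigh special case $v=2$, $\lambda\to 0$ the density degenerates to $\tfrac12 e^{-\snr/2}$ and the same asymptotics hold with exponent $v/2-1=0$.

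Next I would carry out the two integrability checks. In $\Phi_1=\int_0^\infty\snr^{-1}f_{\chi^2}(\snr;v,\lambda)\,\mathrm{d}\snr$ the integrand behaves like $\snr^{v/2-2}$ near $0$, which is integrable if and only if $v>2$; hence $\Phi_1=\infty$ for the Rayleigh ($v=2$, $\lambda\to 0$) and Rician ($v=2$, $\lambda>0$) channels, while $\Phi_1<\infty$ for $v=2s$ with $s\ge 2$ (so $v\ge 4>2$). For $\Phi_2$, the CDF of $M\triangleq\max(\snrsd,\snrrd)$ is $F_{\chi^2}(\cdot)^2$, so its density is $2F_{\chi^2}(\snr)f_{\chi^2}(\snr)\sim c_1\,\snr^{v-1}$ near $0$; the integrand of $\mathbb{E}[1/M]$ therefore behaves like $\snr^{v-2}$ near $0$, which is integrable whenever $v>1$, in particular in all three regimes since $v\ge 2$. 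Thus $\Phi_2<\infty$ in every case. Feeding these verdicts into Theorem~\ref{thm:genbound} gives (i) and (ii): $\bar J^{\text{no relay}}_K(\Reff)\to\infty$ for $K\ge 1$ and $\bar J_K(\Reff)<\infty$ for $K\ge 2$; and (iii): $\bar J^{\text{no relay}}_K(\Reff)<\infty$ for $K\ge 1$ (now using the ``if'' direction of Theorem~\ref{thm:genbound}(i)) and $\bar J_K(\Reff)<\infty$ for $K\ge 2$.

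The main obstacle is the small-argument analysis: one must justify $f_{\chi^2}(\snr;v,\lambda)\sim c_0\snr^{v/2-1}$ and $F_{\chi^2}(\snr;v,\lambda)\sim(2c_0/v)\snr^{v/2}$ rigorously — uniformity of the Bessel series as its argument $\sqrt{\lambda\snr}\to 0$, and continuity of the factor $e^{-(\snr+\lambda)/2}$ at $\snr=0$ — and then convert each ``behaves like'' into a genuine convergence or divergence statement by a comparison test after splitting the integral into $[0,\epsilon]$ and $[\epsilon,\infty)$. Once that is in place the rest is bookkeeping: identifying $v=2,\lambda\to0$ with Rayleigh and $v=2,\lambda>0$ with Rician, observing that the $K=1$ no-relay claims are immediate from \re{cor:K2trunc:1}, and reading the remaining claims off the two integrability verdicts via Theorem~\ref{thm:genbound}.
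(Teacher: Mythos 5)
Your proposal is correct, and it reaches the conclusion by a genuinely different route from the paper. Both arguments funnel everything through Theorem~\ref{thm:genbound}, i.e., through the finiteness of $\Phi_1=\mathbb{E}[1/\snr]$ and $\Phi_2=\mathbb{E}[1/\max(\snrsd,\snrrd)]$, but you decide finiteness by a single unified local analysis: $f_{\chi^2}(\snr;v,\lambda)\sim c_0\,\snr^{v/2-1}$ as $\snr\to 0^+$, hence the integrand of $\Phi_1$ behaves like $\snr^{v/2-2}$ (integrable iff $v>2$) and that of $\Phi_2$ like $\snr^{v-2}$ (integrable iff $v>1$), after which all three parts are read off from the exponent. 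The paper instead treats each regime separately: part (i) is delegated to Theorem~\ref{thm:boundedenergy} via the truncated-exponential limit $\snrtr\to 0$; part (ii) uses the pointwise density bound $e^{-\lambda/2}f_{\text{EXP}}(\snr)\leq f_{\chi^2}(\snr;2,\lambda)$ to force $\Phi_1=\infty$ and a coupling of the Rician SNR with an underlying Rayleigh SNR to bound $\Phi_2$; part (iii) computes $\mathbb{E}[1/\snr]=\tfrac{1}{2(s-1)}$ for $\lambda=0$ by direct integration and then invokes first-order stochastic dominance (via monotonicity of the Marcum Q-function in $\lambda$) to cover $\lambda>0$. Your approach buys uniformity and a sharper statement --- it makes transparent that the dividing line is exactly $v>2$ versus $v\leq 2$ for $\Phi_1$ and $v>1$ for $\Phi_2$, and it dispenses with the stochastic-dominance machinery --- at the cost of having to justify the small-argument asymptotics of $I_{v/2-1}$ and convert the $\sim$ statements into two-sided comparison bounds on $[0,\epsilon]$, which you correctly flag as the one step needing care; the paper's comparisons avoid any asymptotic expansion but require a separate ad hoc device for each case. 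One minor completeness point: for the ``if'' directions you should note, as the paper implicitly does, that boundedness of $\Phi_1$ (resp.\ $\Phi_2$) yields boundedness for \emph{all} $K\geq 1$ (resp.\ $K\geq 2$) and not just $K=1$ or $K=2$, which Theorem~\ref{thm:genbound} already supplies, so your appeal to it is sufficient.
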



\begin{proof} 
We prove the three parts separately.

(i) The PDF of the Rayleigh fading channel is given by the exponential PDF $f_{\text{EXP}}(\snr)=e^{-\snr/2}/2$. This PDF is also obtained by the truncated exponential PDF with $\snrtr\rightarrow 0$. From Theorem~\ref{thm:boundedenergy}, we have already obtained $\bar{J}^{\text{no relay}}_K(\Reff)\rightarrow \infty$ for $K\geq1$ and $\bar{J}_K(\Reff)< \infty$ for $K\geq2$.

(ii) First, consider the case of no relaying. The PDF of the Rician fading channel is given by $f_{\chi^2}(\snr; 2, \lambda)= e^{-(\snr+\lambda)/2} I_0(\sqrt{\lambda \snr})/2$.
Since $I_0(\cdot)$ is an increasing function and $I_0(0)=1$ \cite{Abramowitz:1965:HMF},  $ e^{-\lambda/2} f_{\text{EXP}}(\snr)\leq f_{\chi^2}(\snr; 2, \lambda)$ for $\snr \geq 0$.
Thus $ e^{-\lambda/2} \Phi_1^{\text{EXP}} \leq \Phi_1^{\chi^2}$, where  $\Phi_1^{\text{EXP}}$ and $\Phi_1^{\chi^2}$  denote $\mathbb{E}[1/\snr]$ under the exponential and chi-squared PDFs, respectively.
From part~(i), the NMESE is unbounded for Rayleigh fading channels. By part (i) of Theorem~\ref{thm:genbound} (the ``only if" part), $\Phi_1^{\text{EXP}}\rightarrow\infty$, and hence $\Phi_1^{\chi^2}\rightarrow\infty$ too. Using part (i) of Theorem~\ref{thm:genbound} (the ``if" part), we conclude that the NMESE is unbounded for Rician fading channels.

Next, we sketch the proof for the case of relaying. 
We associate the SNR of a Rician fading channel with the SNR of a Rayleigh fading channel with the same non line-of-sight component. Clearly, $\Phi_2$ of the first channel is always lower bounded by $\Phi_2$ of the second channel. From part (ii) of Theorem~\ref{thm:boundedenergy} (the ``only if" part), $\Phi_2$ of the Rayleigh fading channel is always bounded. Thus, $\Phi_2$ of the Rician fading  channel is also bounded. By part (ii) of Theorem~\ref{thm:genbound}, the NMESE with relaying is bounded.

(iii) First, consider the case of no relaying. Suppose $\lambda=0$. Then the PDF of the SNR is a {\em central} chi-squared distribution.  
By direct integration, we obtain $ \mathbb{E}[1/\snr|\lambda=0] = \frac{1}{2 (s-1)} \triangleq \Phi_1^{\text{central}}<\infty$.
Thus, the NMESE is bounded without relaying.
Now suppose $\lambda>0$.
From \cite[Theorem 1]{5429099}, the generalized Marcum Q-function $Q_{M}(a,b)$ is strictly increasing  in $M$ and $a$ for all $a\geq 0$ and $M,b>0$.
It follows from \eqref{eqn:chi2} that $F_{\snr}(\snr; v, \lambda)\leq F_{\snr}(\snr; v, 0)$. We say that the random variable with $\lambda>0$ (call it $\snr_1$) {\em first-order stochastically dominates} the random variable with $\lambda=0$ (call it $\snr_2$).
From decision theory \cite{HadarRussell69}, it is known that $\mathbb{E}[u(\snr_1)] \geq \mathbb{E}[u(\snr_2)]$
for any increasing (utility) function $u$ if $\snr_1$ first-order stochastically dominates $\snr_2$.
Substituting $u(\snr)=-1/\snr$ then gives $\mathbb{E}[1/\snr_1|\lambda>0] \leq \Phi_1^{\text{central}}$. Since we have shown that $\Phi_1^{\text{central}}<\infty$, it follows that $\mathbb{E}[1/\snr_1|\lambda>0]<\infty$.
Finally, with relaying the NMESE must also be bounded (by explicitly allocating no power to the relay as a suboptimal policy).
This completes the proof.
\end{proof}

Part (iii) of Corollary~\ref{thm:boundedenergy_gen} shows that if more than one independent channels are present for SNR combining, the inherent diversity present is sufficient for the NMESE to be bounded, even without relaying. Intuitively, parts (i) and (ii) Corollary~\ref{thm:boundedenergy_gen} suggest that Rayleigh and Rician channels do not have sufficient diversity and relaying is still necessary. Thus, even a strong-line-of-sight signal component (in Rician channels) is not sufficient for energy-efficient communications and so some form of diversity technique is still useful.

\section{Numerical Results}\label{sec:numerical} 
Closed-form solutions cannot be easily obtained for $K\geq3$ in general, as explained in Section~\ref{sec:K=2}. 
Nevertheless, numerical computation of the minimum sum energy can be obtained using Theorem~\ref{thm:causalCSI} by discretizing the state space of the causal CSI $\state_k=(\snrall_k, \Br_k, \Bd_k)$ for all system states.
To this end, we discretize $\Br_k$ and $\Bd_k$ to take values in the set $\mathcal{B}=\{0 \mbox{ to } \Reff  \mbox{ in discrete steps of } \Delta\}$.
Next, we employ a Monte Carlo approach to compute the expectations in the Bellman equations; the computational complexity of this approach is small compared to the alternative approach of discretization of the link SNRs $\snrall_k$ and their probability spaces.
That is, we obtain $N_{\text{sim}}$ independent realizations of $\snrall_1, \cdots, \snrall_K$ with distribution \re{eqn:pdfSNR} where each link SNR follows the PDF \eqref{eqn:truncexp}.
Then we perform a backward recursion of the Bellman's equation with $k$ initialized as $K$.
For our numerical results, we set $\Delta=0.01$ and $N_{\text{sim}}=5000$.
Specifically, we perform the following steps:
\bi
\item Use \re{thm:causalCSI:eqn:0} to compute $J_{k,\phi_k}(\state_k)$ for each $\phi_k\in\{1,2\}$, $\Br_k, \Bd_k \in \mathcal{B}$, and for each realization of $\snrall_k$. If $k=K$, we can use \re{eqn:JK} immediately.
\item Obtain an approximation of the expectation $\mathbb{E}_{\snrall_k}\left [J_{k,,\phi_k}(\state) \right] $ by averaging over all $\snrall_k$.
\item Repeat the first two steps with $k$ decremented by one.
\ei

\begin{figure}
\centering
\includegraphics[scale=\FigSize]{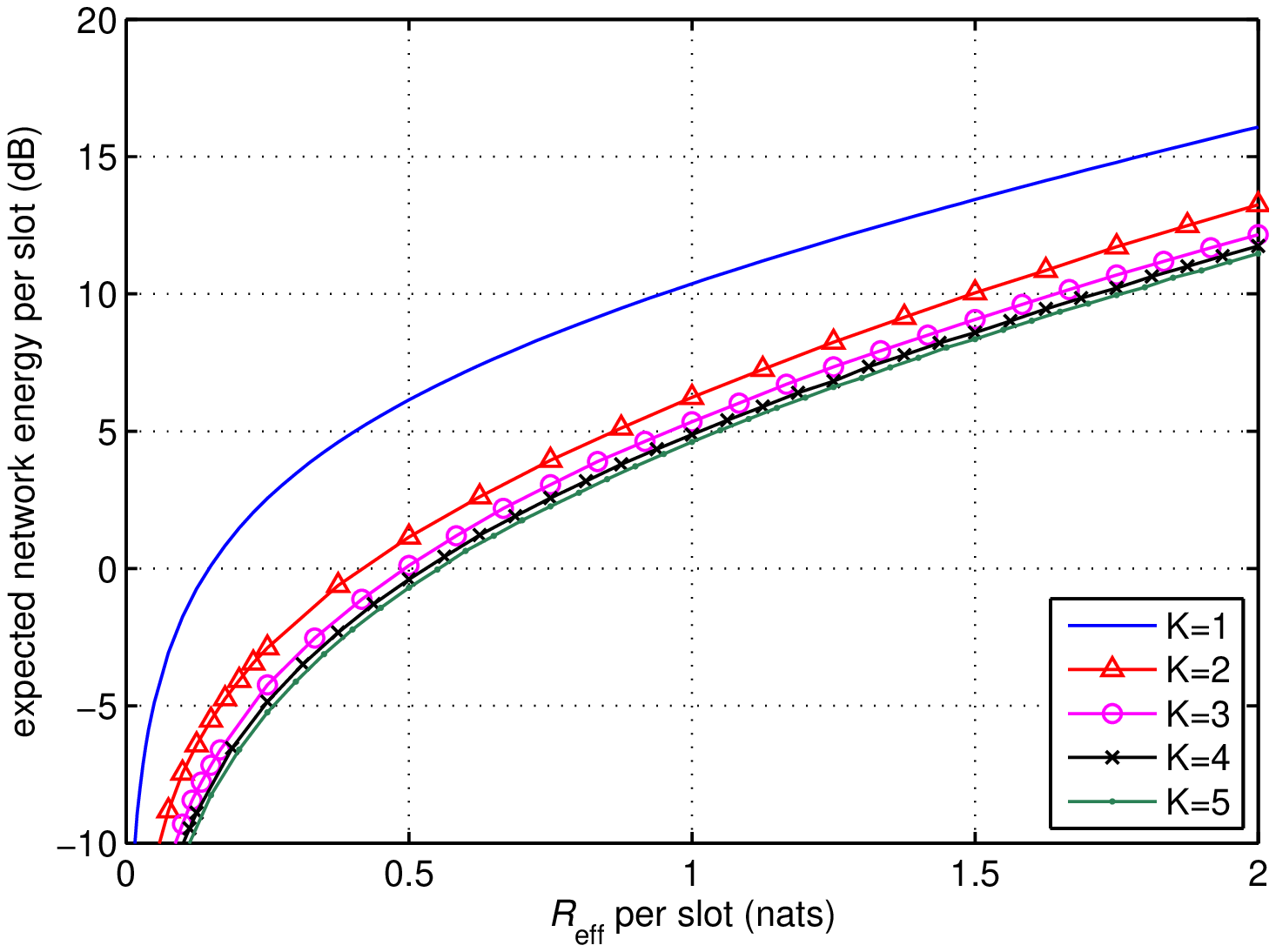}
\caption{Normalized minimum expected sum energy to deliver $\Reff$ nats per slot per channel use over $K$ slots. Relaying and $\snrtr=10^{-3}$. }
\label{fig:smalltrunc_v0}
\end{figure}

\begin{figure}
\centering
\includegraphics[scale=\FigSize]{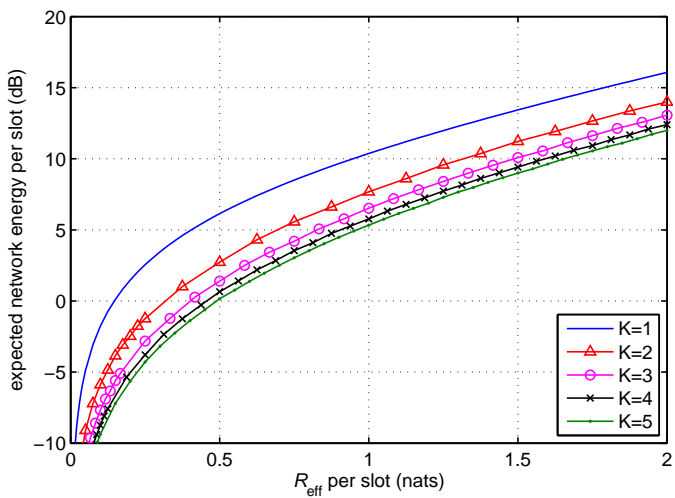}
\caption{Normalized minimum expected sum energy to deliver $\Reff$ nats per slot per channel use over $K$ slots. No relaying and $\snrtr=10^{-3}$.}
\label{fig:norelay_v0}
\end{figure}

We assume the SNRs are i.i.d. over different links and slots according to the truncated exponential PDF with average SNRs $\avesnrsd=\avesnrrd=\avesnrsr=1$; thus, any energy incurred may be taken as normalized with respect to the average SNR. We use the same SNR truncation threshold $\snrtr=10^{-3}$.
Fig.~\ref{fig:smalltrunc_v0} shows the NMESE $\bar{J}_K(\Reff)$ defined in \re{eqn:expectedJ_K} for various $K$.
The improvement is significant initially when $K$ is increased, but less so at larger $K$.
Our system setup may be considered a worst-case scenario, since typically $\relay$ is located somewhere between $\source$ and $\dest$ which implies that $\avesnrsd \ll \avesnrrd$ and $\avesnrsd \ll \avesnrsr$.
Nevertheless, the NMESE is still reduced when compared with Fig.~\ref{fig:norelay_v0} where no relay is used. For example at $\Reff=1$~nat, the reduction is more than $1$~dB and $0.5$~dB for $K=2$ and $K=5$, respectively.

\begin{figure}
\centering
\includegraphics[scale=\FigSize]{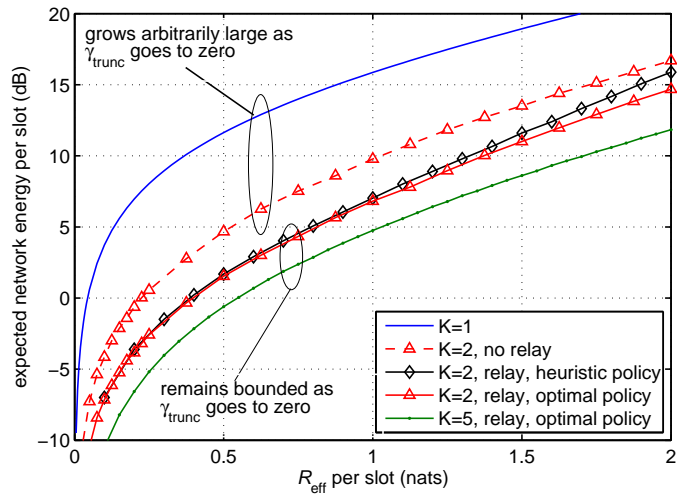}
\caption{A comparison of the normalized expected sum energy per slot per channel use over $K$ slots at different simulation conditions, assuming $\snrtr=10^{-10}$. A heuristic policy that is guaranteed to achieve a bounded expected sum energy is included for comparison.}
\label{fig:relay_110609_vvsmalltrunc_with_heu_paper}
\end{figure}

To investigate the effects of approaching a Rayleigh-fading channel, we reduce the SNR threshold to $\snrtr=10^{-10}$.
From Fig.~\ref{fig:relay_110609_vvsmalltrunc_with_heu_paper}, we see that the NMESE is almost unaffected if relaying is used, while the NMESE clearly increased if no relaying is used.
Fig.~\ref{fig:relay_110609_vvsmalltrunc_with_heu_paper} also highlights the significant progressive drop in NMESE as (i) $K=1$ is increased to $K=2$ without relay, (ii)  relay is added for $K=2$, and (iii)  $K=2$ is increased to $K=5$ with relay.
We also include a heuristic power allocation policy proposed in the proof of Theorem~\ref{thm:boundedenergy} for $K=2$, which is shown to achieve bounded sum energy even for $\snrtr\rightarrow 0$. In this scheme where $K=2$, $\source$ uses the minimum power to transmit in the first slot such that $\relay$ or $\dest$ decodes the message; if only $\relay$ decodes,  $\source$ or $\relay$ that has the stronger link to $\dest$ transmits such that $\dest$ can decode the message. From Fig.~\ref{fig:relay_110609_vvsmalltrunc_with_heu_paper}, the NMESE of the heuristic policy is close to the optimal scheme especially at small $\Reff$. To achieve a good complexity-performance tradeoff, other heuristic policies for different $K$ can also be devised and compared against the optimal schemes presented here.

\section{Conclusion}\label{sec:con}


In this paper, we consider the problem of energy minimization of a slotted relay system that guarantees data delivery within a deadline, assuming the availability of causal CSI.
We have obtained analytical solutions and provided conditions on which the expected sum energy is bounded.
Our results indicate that if full CSI is available, or if the channel is sufficiently rich in diversity, then the expected energy required is bounded. However, the expected energy becomes unbounded if only causal CSI is available (as is typical in practice) {\em and} if the channel is not sufficiently rich in diversity, such as Rayleigh and Rician channels that are representative of wireless channels. To make the expected energy bounded, advanced relaying with adaptive power allocation is necessary. Thus, we provide an alternative viewpoint of the advantage of relaying from the energy-efficient perspective.

An interesting future direction is on the analysis of the advantages of relaying in multi-user and multi-carrier systems from the energy-efficient perspective, compared to the case where no relaying is used such as in \cite{JoungHoTanSun12}.
As the exact analysis appears challenging, further asymptotic analysis is needed to yield insights.
The development of heuristic policies with provably good performance may also yield further insights for practical implementation.


\appendices

\section{Proof of Lemma~\ref{lem:a}}\label{append:proof:a}
We first prove that there is no loss of optimality if only either
$\source$  or $\relay$ transmits in each slot~$k$ in Phase~2, where $\widetilde{K}+1\leq k\leq K$ and $\widetilde{K}$ satisfies the constraint \re{p0:c1}.
Consider $\snrsd_k> \snrrd_k$.
Suppose the optimal power allocation $\mathbf{p}^{\star}_k=[\powro_k, \powso_k]$  satisfies $\powro_k>0$ and $\powso_k>0.$ A necessary condition for $\mathbf{p}^{\star}_k$ to be optimal is that it satisfies the constraint \re{p0:c2}.
By construction, we choose the non-negative powers $\mathbf{p}_k=(\powr_k,\pows_k)$ where $\powr_k=0$ and $\pows_k=\powso_k  + \powro_k \snrrd_k/\snrsd_k$.
Since $\powso_k \snrsd_k + \powro_k \snrrd_k=\pows_k \snrsd_k + \powr_k \snrrd_k$,  $\mathbf{p}_k$ satisfies the constraint \re{p0:c2}, i.e., this $\mathbf{p}_k$ is a feasible solution to Problem~$P0$.
Now, we write $\pows_k+\powr_k=\powso_k + \powro_k \snrrd_k/\snrsd_k < \powso_k  + \powro_k$ since $\snrsd_k> \snrrd_k$, and so $\mathbf{p}_k$ achieves a smaller sum energy than  $\mathbf{p}^{\star}_k$.
By contradiction,  $\mathbf{p}^{\star}_k$ is not optimal. Hence, it is optimal that only $\source$  or $\relay$ transmits.
The case of $\snrsd_k<\snrrd_k$ can be proved similarly.
%
Finally, consider $\snrsd_k=\snrrd_k$. Any non-negative $(\powr_k, \pows_k)$, such that $\powr_k+\pows_k$ is fixed, does not affect the constraint nor the objective function. Hence, no loss is incurred if  $\powro_k=0$ or $\powso_k=0$.
Now, given that only one node transmits, we complete the proof of Lemma~\ref{lem:a} that it is optimal for the {\em stronger} node to transmit. The proof is by contradiction: suppose only the {\em weaker node} with a strictly smaller SNR to the destination transmits, then a higher power needs to be allocated to satisfy the constraint \eqref{p0:c2}. This implies a smaller sum energy, i.e., a strictly suboptimal power allocation.


\section{Proof of Theorem~\ref{lem:2}}\label{proof:lem:2}
We use the Bellman's principle of optimality to solve Problem~$P1$ recursively, starting from slot $k=K$ until $k=1$ \cite{Bertsekas}.
It is useful to refer to the state diagram in Fig.~\ref{fig:transitions2}.
Given causal CSI $\state_k$ in slot $k$, $1\leq k\leq K$, let the sub-policy consisting of the power allocation of slots $k,\cdots,K,$ be $\pi_k(\state_k)\triangleq\{(p_k(\state_k),p_{k+1}(\state_{k+1}),\cdots,p_K(\state_K))
\geq \mathbf{0}, \forall \state_i\in {\mathcal{S}}_i, i=k+1,\cdots,K: \eqref{p1:c1} \text{ holds}\}.$

Consider the last slot $K$ in Phase $\phi\in\{1,2\}$ given causal CSI $\state$.
The minimum energy incurred for slot $K$ in Problem~$P1$, where the optimizing variable is $p_K(\state)=p_K$, is 
\be\label{app:eqn:JK}
J_{K,\phi}(\state) = \min_{p_K(\state)\in \pi_K(\state)}  \;  p_K.
\ee
The constraint \eqref{p1:c1} in $\pi_K(\state)$ is active only {\em for the last slot}, and so is equivalent to  $\Bd_{K+1} \leq 0$, i.e., the state must transit to $(K+1,\termstate)$.
From \re{eqn:amid},  we get $\Bd_{K} \leq \mi(p_K \snrsd_K)$ if $\phi=1$, and $\Bd_{K} \leq \mi(p_K \snrmax_K)$  if $\phi=2$.
Thus, $J_{K,\phi}(\state)$  can be obtained by \re{eqn:JK} for slot $K$.
Next, consider the next-to-last slot $k=K-1$ in Phase $\phi\in\{1,2\}$ given causal CSI $\state$.
We denote the next phase that the slot will transit into as $\phi'$ and the next causal CSI as $\state'$.
The minimum expected sum energy for slot $K-1$ and slot $K$  in Problem~$P1$, where the optimizing variables are  $p_{K-1}(\state)=p_{K-1}$ and the set $\{p_K(\state'), \forall \state'\}$, is then given by
\ben
&& J_{K-1,\phi}(\state) \nonumber \\
&=& \min_{(p_{K-1}(\state),p_K(\state'), \forall \state')\in\pi_{K-1}(\state)}   p_{K-1} + \mathbb{E}_{\phi', \state'}\left[ p_K | \phi, \state\right] \IEEEeqnarraynumspace\\
&=& \min_{p_{K-1}(\state)\geq 0} p_{K-1}  + \mathbb{E}_{\phi', \state'}\left[ \min_{p_K(\state')\in \pi_K(\state')} p_K \Big | \phi, \state\right] \\
&=& \min_{p_{K-1}(\state)\geq 0} p_{K-1}  + \mathbb{E}_{\phi', \state'}\left[ J_{K,\phi'}(\state')  | \phi, \state\right].
\een
In general, let $J_{k,\phi}(\state)$ denote the minimum expected sum energy for slot $k$ until slot $K$ in Problem~$P1$ given phase $\phi$ and causal CSI $\state$.
Repeating the proof similarly for decreasing $k=K-2,\cdots, 1$, we obtain \eqref{eqn:Jk} for all $k\in \mathcal{K}\backslash K$.
Finally, if $\phi=\termstate$,  the constraint \eqref{p1:c1} is satisfied and so it is optimal that $p^{\star}_k=0$,
hence we obtain \re{eqn:J_termstate}.
By backward recursion until $k=1$, we obtain $J_{1,1}(\state)$, which is the minimum expected sum energy for all slots since the system must start in Phase 1.
The optimal policy $\pi^*$ that achieves $J_{1,1}(\state)$ is clearly given by the power allocation that solves the backward recursive equations.

\section{Derivation of  \re{cor:K2trunc:1}-\re{eqn:2div}}
%
\label{append:K2trunc}
We obtain \re{cor:K2trunc:1}  from \re{eqn:JK} since the first slot equals the last slot and relaying is not possible, and we obtain  \re{cor:K2trunc:2} by applying Corollary~\ref{cor:K2}.
The derivation for \re{eqn:1div} follows by direct substitution and algebraic manipulations. The derivation for \re{eqn:2div} follows from the
%
 distribution of $\snrmax=\max\{ \snrsd, \snrrd\}$, which can be expressed as
$
f_{\snrmax}(x) = \frac{\partial }{\partial x} F_{\snrmax}(x) =  \frac{\partial}{\partial x}  \Pr\left( \snrsd\leq x,  \snrrd\leq  x \right)
= \frac{\partial}{\partial x}  F_{\snrsd}(x) F_{\snrrd}(x)
= f_{\snrsd}(x) F_{\snrrd}(x) + f_{\snrrd}(x) F_{\snrsd}(x)
$
for $x\geq  \snrtr$, where $F_X(\cdot)$ denotes the CDF of $X$. 
After some algebraic manipulations, we get $\Phi_2$ 
 as in \re{eqn:2div}.

\section{An Asymptotic Result}\label{append:lem:asym}

From \re{eqn:1div} and \re{eqn:2div}, we can express
\be
\Phi_1&=& \xi_1(\avesnrsd, \snrtr) \label{eqn:phi1_newexpress}\\
\Phi_2&=&\xi_2(\avesnrsd, \avesnrrd, \snrtr) \label{eqn:phi2_newexpress}
\ee
for $\avesnrsd, \avesnrrd >0$, where we denote
\be
\xi_1(x, \epsilon)&=&\exp\left(\epsilon/x\right) \Eone{\epsilon/x}/x  \label{eqn:xi1} \\
\xi_2(x, y, \epsilon)&=&\xi_1(x, \epsilon)+\xi_1(y, \epsilon)-\xi_1({\sf HM}(x,y), \epsilon) \label{eqn:xi2}
\ee
for $\epsilon\geq 0$ and $x> 0$, where $\Eone{x}=\int_x^{\infty}\exp(-t)/t \,\mathrm{d}t$ is the exponential integral and ${\sf HM}(x,y)=(1/x+1/y)^{-1}$ is the harmonic mean of $x$ and $y$.

\begin{lemma}\label{lem:asym}
For small $\epsilon$, we have the following asymptotic results:
\be\label{lem:asym1}
\xi_1(x, \epsilon) &=& -\frac{1}{x} \left( \gamma'   + \log\left(\frac{\epsilon}{x}\right)\right) + \mathcal{O}(\epsilon \log(\epsilon)) 
\\
\nonumber
\xi_2(x, y, \epsilon)
&=& \frac{1}{x} \log\left(1+\frac{x}{y}\right)+ \frac{1}{y} \log\left(1+\frac{y}{x}\right)
+ \mathcal{O}(\epsilon \log(\epsilon)).
\\
\label{lem:asym2}
\ee
where $\gamma'$ is the Euler's constant.
As $\snrtr\rightarrow 0$, we have
\be
\Phi_1 &\rightarrow& \infty \\
\Phi_2 &\rightarrow&  \frac{1}{\avesnrsd} \log\left(1+\frac{\avesnrsd}{\avesnrrd}\right)  +  \frac{1}{\avesnrrd} \log\left(1+\frac{\avesnrrd}{\avesnrsd}\right) \!< \!\infty.
\IEEEeqnarraynumspace
\ee
\end{lemma}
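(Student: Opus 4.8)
The plan is to reduce both claims to the standard small-argument expansion of the exponential integral, namely $\Eone{z} = -\gamma' - \log z + \sum_{n\ge 1}\frac{(-1)^{n+1}z^n}{n\,n!} = -\gamma' - \log z + \mathcal{O}(z)$ as $z\to 0^+$, together with $\exp(z) = 1 + \mathcal{O}(z)$. First I would set $z = \epsilon/x$ in the definition $\xi_1(x,\epsilon) = \exp(\epsilon/x)\,\Eone{\epsilon/x}/x$ and multiply the two expansions. The one point that needs care is that the factor $\Eone{\epsilon/x}$ is of size $\Theta(\log\epsilon)$ (through the term $-\log(\epsilon/x)$), so its product with the $\mathcal{O}(\epsilon)$ correction in $\exp(\epsilon/x)$ is $\mathcal{O}(\epsilon\log\epsilon)$ rather than $\mathcal{O}(\epsilon)$; collecting terms gives $\xi_1(x,\epsilon) = -\frac{1}{x}\bigl(\gamma' + \log(\epsilon/x)\bigr) + \mathcal{O}(\epsilon\log\epsilon)$, which is \eqref{lem:asym1}. (If uniformity is wanted, this holds uniformly for $x$ in any compact subset of $(0,\infty)$, which is all that is needed here.)

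Next I would substitute \eqref{lem:asym1} into $\xi_2(x,y,\epsilon) = \xi_1(x,\epsilon) + \xi_1(y,\epsilon) - \xi_1({\sf HM}(x,y),\epsilon)$. The key observation is the elementary identity $1/{\sf HM}(x,y) = 1/x + 1/y$: it forces the coefficient of the divergent $\gamma'$ term and of the divergent $\log\epsilon$ term to vanish, since $1/x + 1/y - 1/{\sf HM}(x,y) = 0$. What survives is $\frac{1}{x}\log x + \frac{1}{y}\log y - \frac{1}{{\sf HM}(x,y)}\log{\sf HM}(x,y) = \frac{1}{x}\log\frac{x}{{\sf HM}(x,y)} + \frac{1}{y}\log\frac{y}{{\sf HM}(x,y)}$, and using $x/{\sf HM}(x,y) = 1 + x/y$ and $y/{\sf HM}(x,y) = 1 + y/x$ this equals exactly $\frac{1}{x}\log(1+x/y) + \frac{1}{y}\log(1+y/x)$; the three error terms simply add to $\mathcal{O}(\epsilon\log\epsilon)$, giving \eqref{lem:asym2}.

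For the limits as $\snrtr\to 0$ I would invoke \eqref{eqn:phi1_newexpress}–\eqref{eqn:phi2_newexpress}. For $\Phi_1 = \xi_1(\avesnrsd,\snrtr)$, the term $-\frac{1}{\avesnrsd}\log(\snrtr/\avesnrsd)\to+\infty$ dominates the bounded remainder, so $\Phi_1\to\infty$. For $\Phi_2 = \xi_2(\avesnrsd,\avesnrrd,\snrtr)$, the leading expression in \eqref{lem:asym2} is independent of $\snrtr$ and the error is $\mathcal{O}(\snrtr\log\snrtr)\to 0$, so $\Phi_2$ converges to the stated finite value $\frac{1}{\avesnrsd}\log(1+\avesnrsd/\avesnrrd) + \frac{1}{\avesnrrd}\log(1+\avesnrrd/\avesnrsd)$. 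The only genuinely non-routine part of the whole argument is the error bookkeeping in the first step — confirming that an $\mathcal{O}(\epsilon)$ multiplicative error against a logarithmically large factor inflates only to $\mathcal{O}(\epsilon\log\epsilon)$ — and the fact that the divergent pieces of $\xi_2$ must be cancelled via the harmonic-mean identity \emph{before} any limit is taken; everything else is algebra.
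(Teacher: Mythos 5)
Your proposal is correct and follows essentially the same route as the paper's proof: expand $\exp(z)=1+\mathcal{O}(z)$ and $\Eone{z}=-\gamma'-\log z+\mathcal{O}(z)$ at $z=\epsilon/x$, multiply, and let the identity $1/{\sf HM}(x,y)=1/x+1/y$ cancel the divergent $\gamma'$ and $\log\epsilon$ terms in $\xi_2$. You merely make explicit the "algebraic manipulations" the paper leaves implicit, including the correct $\mathcal{O}(\epsilon\log\epsilon)$ error bookkeeping.
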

\begin{proof}
We use the asymptotic results that $\exp(\epsilon)=1+ \epsilon+ \mathcal{O}(\epsilon^2)$ and $\Eone{\epsilon}=-\gamma' -\log \epsilon + \mathcal{O}(\epsilon)$ \cite[Section~5.1.11]{Abramowitz:1965:HMF} for small $\epsilon$, to obtain \re{lem:asym1} and \re{lem:asym2} after some algebraic manipulations. Since $-\log(\epsilon) \rightarrow \infty$ and $\epsilon\log(\epsilon)\rightarrow 0$ as $\epsilon\rightarrow 0$, clearly $\Phi_1$ and $\Phi_2$ given in \re{eqn:phi1_newexpress} and \re{eqn:phi2_newexpress}, respectively, approach their respective bounds in Lemma~\ref{lem:asym}.
\end{proof}


%

\end{document}